\def\ojoin{\setbox0=\hbox{$\Join$}%
  \rule[0.05ex]{.33em}{0.4pt}\llap{\rule[1ex]{.33em}{0.4pt}}}
\def\leftouterjoin{\mathbin{\ojoin\mkern-7.1mu\Join}}
\newtheorem{prop}{Property}
\begin{document}

\title{Algorithms and Analysis for the SPARQL Constructs}


\author{Medha Atre}


\institute{Medha Atre \at
              Dept. of Computer Science and Engineering,\\
              Indian Institute of Technology, Kanpur, India\\
              \email{medha.atre@gmail.com} 
}

\date{Received: date / Accepted: date}

\maketitle

\begin{abstract}
As Resource Description Framework (RDF) is becoming a popular data modelling 
standard, the challenges of efficient processing of \textit{Basic Graph Pattern} 
(BGP) SPARQL queries (a.k.a. SQL inner-joins) have been a focus of the research 
community over the past several years. 
In our recently published work we brought community's attention to another 
equally important component of SPARQL, i.e., OPTIONAL pattern queries (a.k.a. 
SQL left-outer-joins). We proposed novel optimization techniques 
-- first of a kind -- 
and showed experimentally that our techniques perform better for the 
\textit{low-selectivity} queries, and give at par performance for the highly 
selective queries, compared to the state-of-the-art methods.

BGPs and OPTIONALs (BGP-OPT) make the basic building blocks of the SPARQL query 
language. Thus, in this paper, treating our BGP-OPT query optimization 
techniques as the \textit{primitives}, we extend them to handle other broader 
components of SPARQL such as such as UNION, FILTER, and DISTINCT. We 
mainly focus on the \textit{procedural} (algorithmic) aspects of these 
extensions. We also make several important observations about the 
structural aspects of complex SPARQL queries with any intermix of these clauses, 
and \textit{relax} some of the constraints regarding the \textit{cyclic} 
properties of the queries proposed earlier. We do so without affecting the 
correctness of the results, thus providing more flexibility in using the 
BGP-OPT optimization techniques.
\end{abstract}

\section{Introduction} \label{sec:intro}
Resource Description Framework (RDF) \cite{rdf} is being used as the standard 
for representing \textit{semantically linked} data on the web as well as for 
other domains such as biological networks, e.g. UniProt RDF network by Swiss 
Institute of 
Bioinformatics\footnote{\url{http://www.uniprot.org/format/uniprot_rdf}}.
RDF is a directed edge-labeled multi-graph, where each unique edge (S P 
O) is called a \textit{triple} -- P is the label on the edge from the node S to 
node O, and SPARQL \cite{sparql} is the standard query language for it.

SPARQL provides various syntactic constructs to form \textit{structured} queries 
over RDF graphs. These constructs have a close similarity to their SQL 
counterparts. For instance, Basic Graph Patterns (BGP) of SPARQL (or 
\textit{TriplesBlock} as referred to in the SPARQL grammar) are similar to 
the SQL inner-joins ($\Join$). OPTIONAL patterns of SPARQL 
(\textit{OPTIONALGraphPattern} in the SPARQL grammar) are similar to 
the left-outer-joins ($\leftouterjoin$). FILTERs of SPARQL makes up for the 
SQL LIKE clause and various other selection conditions. UNIONs ($\cup$) and 
DISTINCTs of SPARQL are similar to their SQL counterparts.
\textit{GroupGraphPattern} of the SPARQL grammar consists of BGP, 
OPTIONAL, UNION, and FILTER components, and like SQL, SPA-RQL grammar too 
allows nested queries with a complex intermix of these query constructs. Since 
there is an equivalence between SPARQL and SQL constructs, we will use these 
terms interchangeably in the rest of the text.

BGP queries make the building blocks of SPARQL, and just like SQL 
inner-joins, they are \textit{associative} and \textit{commutative}, i.e., a 
change in the order of joins among the triple patterns does not change the 
final results -- thus allowing a \textit{reorderability} among the BGP triple 
patterns. Owing to these similarities, the RDF and SPA-RQL community has 
adopted methods of SQL inner-join optimization, and have taken them further with 
the novel ideas of RDF graph indexing \cite{bitmatwww10,triad,rdf3x,triplebit}. 
However, optimization of SPARQL queries with other query constructs poses 
additional challenges, because they restrict the 
\textit{reorderability} of the triple patterns across various BGPs. Given below 
is an OPTIONAL pattern query from \cite{atresigmod15}.

\vspace{2mm}
\begin{lgrind}
\Head{}
{\small
\L{\LB{\K{Q1}:_\K{SELECT}_?\V{friend}_?\V{sitcom}}}
\L{\LB{}\Tab{6}{\K{WHERE}_\{}}
\L{\LB{}\Tab{8}{:\V{Jerry}_:\V{hasFriend}_?\V{friend}_.}}
\L{\LB{}\Tab{8}{\K{OPTIONAL}_\{}}
\L{\LB{}\Tab{10}{?\V{friend}_:\V{actedIn}_?\V{sitcom}_.}}
\L{\LB{}\Tab{10}{?\V{sitcom}_:\V{location}_:\V{NYC}_.}}
\L{\LB{}\Tab{8}{\}}}
\L{\LB{}\Tab{6}{\}}}
}
\end{lgrind}
\vspace{2mm}

This query asks for all friends of \textit{:Jerry} that have acted in a sitcom 
located in \textit{:NYC}. In this query, let (\textit{:Jerry 
:hasFriend ?friend}) be $T_1$, (\textit{?friend :actedIn ?sitcom}) $T_2$, and 
(\textit{?sitcom :location :NYC}) $T_3$. $T_1$ makes a left-outer-join over 
\textit{?friend} with $T_2$. $T_2$ and $T_3$ make an inner-join between 
them over \textit{?sitcom}. The query can be expressed as $(Query = T_1 
\leftouterjoin (T_2 \Join T_3))$. If we consider triple patterns to be 
equivalent to relational tables, then $T_1$ forms a BGP (say $P_1$) with just 
one triple pattern, and $(T_3 \Join T_4)$ forms another BGP (say $P_2$). Note 
that we emphasized the order of joins by putting the join $(T_2 \Join T_3)$ in a 
bracket to indicate that this inner-join must be evaluated before the 
left-outer-join between $T_1$ and $T_2$ for the correct results. This is 
because $T_1 \leftouterjoin (T_2 \Join T_3) \neq (T_1 \leftouterjoin T_2) \Join 
T_3 \neq (T_1 \leftouterjoin T_2) \leftouterjoin T_3$ (we will show this with a 
toy example in Section \ref{sec:nullbm}). Inner and left-outer joins are 
non-reorderable, so when we have a query with an intermix of other query 
operators too such as UNIONs, FILTERs in addition to the OPTIONALs, this poses 
additional restrictions on reorderability. Consider the following query with 
an intermix of BGPs, OPTIONALs, and UNIONs\footnote{Unlike SQL, SPARQL 
standards allows UNIONs between results of different \textit{arity}.}.

\vspace{2mm}
\begin{lgrind}
\Head{}
{\small
\L{\LB{\K{Q2}:_\K{SELECT}_?\V{friend}_?\V{sitcom}}}
\L{\LB{}\Tab{6}{\K{WHERE}_\{}}
\L{\LB{}\Tab{8}{:\V{Jerry}_:\V{hasFriend}_?\V{friend}_.}}
\L{\LB{}\Tab{8}{\{}}
\L{\LB{}\Tab{12}{\{?\V{friend}_:\V{actedIn}_?\V{sitcom}_.}\}}
\L{\LB{}\Tab{12}{\K{UNION}}}
\L{\LB{}\Tab{12}{\{?\V{friend}_:\V{hasFriend}_?\V{friend2}_.}}
\L{\LB{}\Tab{12}{?\V{friend2}_:\V{actedIn}_?\V{sitcom}_.}\}}
\L{\LB{}\Tab{8}{\}}}
\L{\LB{}\Tab{8}{\K{OPTIONAL}_\{}}
\L{\LB{}\Tab{10}{?\V{sitcom}_:\V{hasDirector}_?\V{dir}_.}}
\L{\LB{}\Tab{10}{?\V{sitcom}_:\V{location}_:\V{NYC}_.}}
\L{\LB{}\Tab{8}{\}\}}}
}
\end{lgrind}
\vspace{2mm}

This query asks for all the friends and friends-of-friends of \textit{:Jerry} 
who have acted in \textit{any} sitcom, and \textit{optionally} it asks for the 
directors of the respective sitcoms if their location was \textit{NYC}. We have 
in all six triple patterns in this query. Numbering them $T_{1..6}$ from top 
to bottom, the query can be expressed as $Q = (T_1 \Join (T_2 \cup (T_3 \Join 
T_4))) \leftouterjoin (T_5 \Join T_6)$. These triple patterns form four BGPs in 
the query, which are as follows: $P_1 = T_1, P_2 = T_2, P_3 = (T_3 \Join T_4), 
P_4 = (T_5 \Join T_6)$, and then the query can be expressed as $Q = (P_1 
\Join (P_2 \cup P_3)) \leftouterjoin (P_4)$. Note that we cannot do the 
left-outer join between $T_2$ and $T_5, T_6$ before evaluating $P_4 =$ ($T_5 
\Join T_6$) and the UNION $P_2 \cup P_3$.

Analysis of the real world SPARQL queries shows that queries with an intermix 
of BGP, OPTIONALs, UNIONs, FILTERs indeed constitute over 94\% the query logs
\cite{usewod11,swim,manvmachine,practsparql}, and thus make these other 
constructs like OPTIONAL, UNION, FILTER non-negligible from the query processing 
and performance optimization perspective. In our previous work 
\cite{atresigmod15} we focused on the OPTIONAL pattern queries (referred to 
henceforth as OPT queries), and proposed novel techniques for the 
optimization of these queries. Our techniques extended the ideas of 
\textit{nullification} and \textit{best-match} (or 
\textit{Generalized-Outerjoin}) operators as proposed in 
\cite{rao2,rao1,galindo-legaria2}, and sho-wed that for \textit{acyclic} 
queries we can reduce the candidate triples to \textit{minimal} using the 
\textit{semi-join} based \textit{pruning} \cite{semij1,semij2,ullman}, and avoid 
the nullification and best-match operations altogether (see the lemmas 
in \cite{atresigmod15}). Since BGP-OPT make the building blocks of SPARQL 
queries, in this paper we mainly show that our BGP-OPT optimization techniques 
can be used as \textit{primitives} to evaluate queries with an intermix of the 
various SPARQL query constructs. Much of the research based systems developed 
for the optimization of SPARQL queries have only handled the BGP component 
\cite{bitmatwww10,rdf3x,triplebit}, and they either do not handle other SPARQL 
constructs, or rely on na\"{\i}ve ways of processing them. The SPARQL processing 
systems based on relational databases, such as MonetDB or Virtuoso, just 
translate the SPARQL queries into their SQL counterparts, by assuming that the 
RDF graph is stored in the relational tables.

In the light of this, we propose to use our BGP-OPT evaluation techniques as 
\textit{primitive building blocks} to cover a broader spectrum of the SPARQL 
queries. While doing so, we focus on the \textit{procedural} (algorithmic) 
aspects of using BGP-OPT techniques than the empirical aspects, because our 
previous work has already established the usefulness of our BGP-OPT evaluation 
techniques -- especially for the \textit{low-selectivity}  
que-ries\footnote{Queries which need to process a large amount of data have low 
selectivity and vice versa.}.
In this paper, we make the following main contributions.
\begin{enumerate}
 \item We propose a new method of forming the \textit{Graph of Supernodes} 
(GoSN) that enhances our previously proposed method \cite{atresigmod15} 
(Section \ref{sec:gosn}).
 \item Using the above mentioned new way of constructing the GoSN, we show 
that the condition of \textit{acyclicity} of \textit{Graph of Tables} (GoT) of a 
BGP-OPT query can be relaxed in some cases in addition to the conditions given 
in \cite{atresigmod15} (Sections \ref{sec:nullmin} and \ref{sec:cyclic}).
  \item We propose a way of methodically using the BGP-OPT query 
optimization techniques for the queries with an intermix of UNION and FILTER 
clauses \textit{without} evaluating each query in the 
\textit{UNION normal form} (UNF) individually as proposed in 
\cite{atresigmod15} (Sections \ref{sec:union} and \ref{sec:filter}).
\item We also discuss handling of the DISTINCT clause  with any intermix of 
these query clauses (Section \ref{sec:distinct}).
  \item In the context of UNION and FILTER, we bring to the light implications 
of ``NULLs'', and their semantics for the \textit{nullification} and 
\textit{best-match} operators.
\item Since there is a close match between SPARQL query operators and SQL, our 
techniques and insights can be useful for their SQL counterparts too, with 
appropriate indexing and data representation methods in the relational setting.
\end{enumerate}

\section{Graph of Supernodes} \label{sec:gosn}
In our previous work \cite{atresigmod15}, we had outlined a way of capturing 
a SPARQL query with an intermix of BGP and OPTIONAL patterns using the 
\textit{Graph of Supernodes} (GoSN). For the sake of completeness of the text, 
here we first describe the process of GoSN construction, and then elaborate on 
the \textit{new} enhancements. These enhancements help in our 
propositions regarding the relaxation of the \textit{nullification} and 
\textit{best-match} operations, based on the \textit{cyclic} 
properties of a query. For this construction of GoSN, we focus only on the BGP 
OPT patterns without any other SPARQL constructs. They serve as the 
\textit{primitives} for applying the BGP-OPT query processing techniques for a 
broader range of queries with any intermix of UNIONs, FILTERs, and DISTINCT as 
elaborated in Section \ref{sec:discuss}.

A BGP-OPT pattern query connects multiple BGP patterns with each 
other using one or more \textit{OPTIONAL} keywords. Connecting BGP patterns 
($\Join$) with OPTION-AL clauses ($\leftouterjoin$) introduces restrictions on 
the order of joining the triple patterns across various BGPs (refer to the 
example of reorderability given in Section \ref{sec:intro}).
A GoSN is constructed from a BGP-OPT query using \textit{supernodes}, and 
\textit{unidirectional} or \textit{bidirectional} edges, as follows.

\begin{figure}[h]
       \centering
      \includegraphics[scale=0.4]{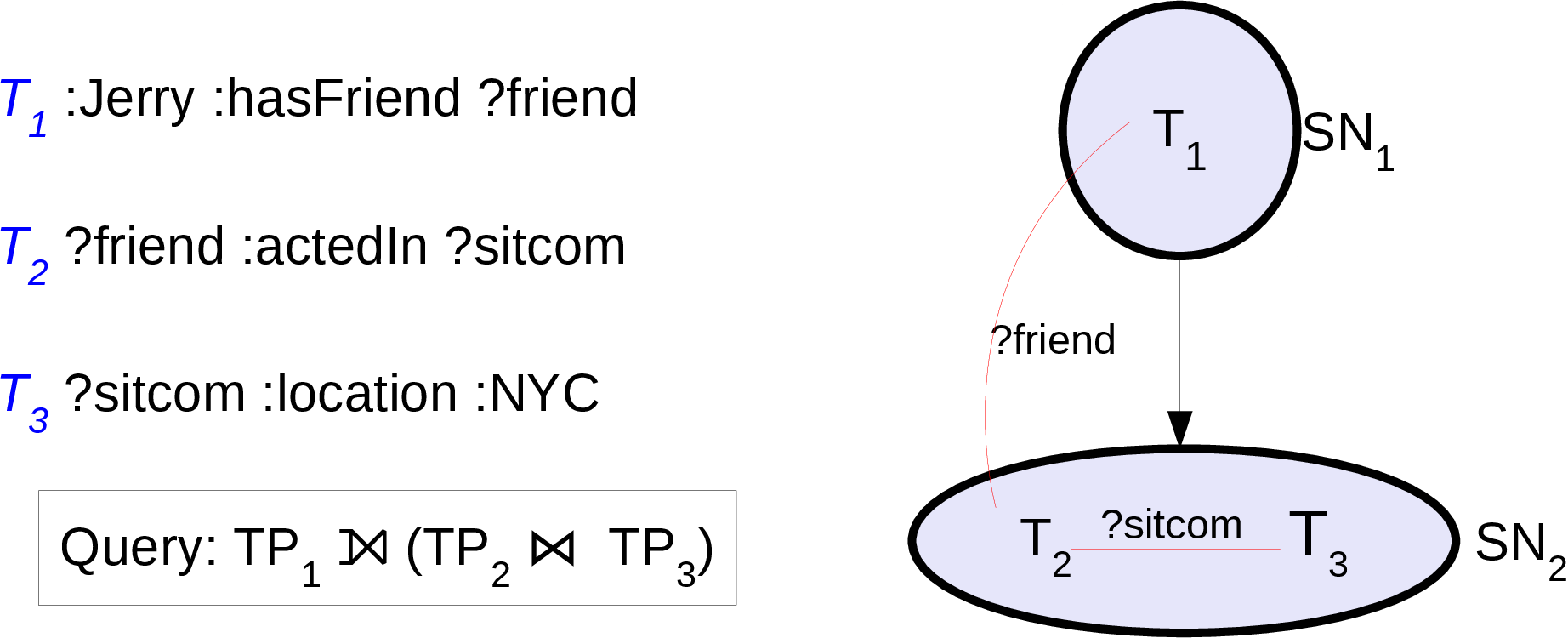}
      \captionsetup{singlelinecheck=off}
  \caption{GoSN for Q1 in Section \ref{sec:intro}}\label{fig:qgraph}
\end{figure}

\begin{figure}[h]
\begin{center}
               \includegraphics[scale=0.4]{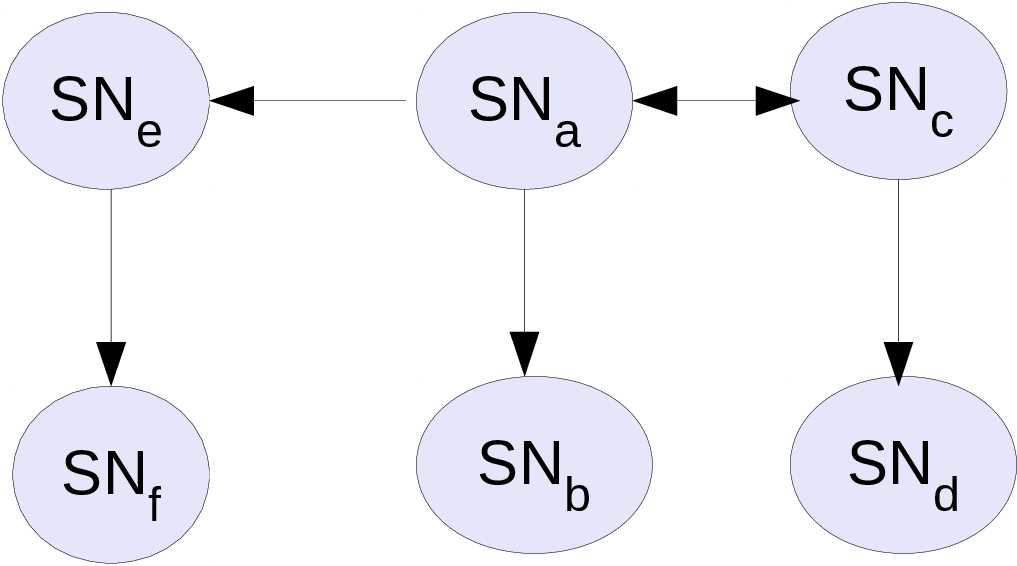}
               \captionsetup{singlelinecheck=off}
    \caption[.]{GoSN for $((P_a \leftouterjoin P_b) \Join (P_c \leftouterjoin P_d)) \leftouterjoin (P_e 
\leftouterjoin P_f)$} \label{fig:qgrex}
\end{center}
\end{figure}

\textbf{Supernodes:}
In a BGP-OPT query of the form $(P_1 \leftouterjoin P_2)$, $P_1$ and $P_2$ are 
patterns that may in turn have nested OPTIONAL patterns inside them, e.g., $P_1 
= (P_3 \leftouterjoin P_4)$, or either of $P_1$ and $P_2$ can be OPT-free. 
Generalizing it, if a pattern $P_i$ does not have any OPTIONAL pattern nested 
inside it, it is an \textit{OPT-free} BGP or simply a BGP. From the given 
nested BGP-OPT query, first we extract all such BGPs, and construct a 
\textit{supernode}  ($SN_i$) for each $P_i$. The triple patterns in $P_i$ are 
\textit{encapsulated} in $SN_i$.

Since BGPs are equivalent to SQL inner-joins, and OPTIONAL patterns are 
equivalent to SQL left-outer-joins, we serialize a nested BGP-OPT query 
considering its BGPs and $\Join$ (inner-join), $\leftouterjoin$ 
(left-outer-join) operators using proper parentheses. \fxnote{Give an example 
consistent with intro.} E.g., we serialize \textbf{Q1} in Section 
\ref{sec:intro} as ($P_1 \leftouterjoin P_2$), where $P_1$ and $P_2$ are 
OPT-free BGPs, $SN_1$ of $P_1$ encapsulates just $T_1$, and $SN_2$ of $P_2$ 
encapsulates $T_2$ and $T_3$ (see Figure \ref{fig:qgraph}). 

\textbf{Unidirectional edges:}
From the serialized query, we consider each OPT pattern of the type $P_m 
\leftouterjoin P_n$. $P_m$ or $P_n$ may have nested OPT-free BGPs inside them.
Using the serialized-parenthesized form, we identify the \textit{leftmost} 
OPT-free BGPs nested inside $P_m$ and $P_n$ each. E.g., consider the example 
query given in Fig. \ref{fig:qgrex}. If $P_m = ((P_a 
\leftouterjoin P_b) \Join (P_c \leftouterjoin P_d))$, and $P_n = (P_e 
\leftouterjoin P_f)$, $P_a$ and $P_e$ are the \textit{leftmost} OPT-free BGPs 
in $P_m$ and $P_n$ respectively, and $SN_a$ and $SN_e$ are their supernodes. We 
add a directed edge $SN_a \rightarrow SN_e$. If either $P_m$ or $P_n$ does not 
nest any OPT-free BGPs inside it, we treat the very pattern as the 
\textit{leftmost} for adding a directed edge. With this procedure, we can treat 
OPTIONAL patterns in a query in any order, but for all practical purposes, we 
start from the \textit{innermost} OPTIONAL patterns, and recursively go on 
considering the outer OPTIONAL patterns using the parentheses in the serialized 
query. E.g., if a serialized query is ($(P_a \leftouterjoin P_b)$ $\Join$
$(P_c \leftouterjoin P_d)$) $\leftouterjoin$ ($P_e \leftouterjoin P_f$), with 
$P_{a}...P_{f}$ as OPT-free BGPs, we add directed edges as follows: (1) $SN_a 
\rightarrow SN_b$, (2) $SN_c \rightarrow SN_d$, (3) $SN_e \rightarrow SN_f$, 
(4) $SN_a \rightarrow SN_e$.

\textbf{Bidirectional edges:}
Next we consider each inner-join of type $P_x \Join P_y$ in a serialized query.
If $P_x$ or $P_y$ has nested OPTIONALs inside, we add a bidirectional edge 
between the supernodes of \textit{leftmost} OPT-free BGPs.
E.g., if $P_x = (P_a \leftouterjoin P_b)$, and $P_y =(P_c \leftouterjoin P_d)$,
we add a bidirectional edge $SN_a \leftrightarrow SN_c$. If $P_x$ or $P_y$ does 
not nest any OPTIONALs inside it, we consider the very pattern to be the 
\textit{leftmost} for adding a bidirectional edge. We add bidirectional edges 
starting from the \textit{innermost} inner-joins ($\Join$) using the 
parentheses 
in the serialized query, and recursively go on considering the outer ones, 
until 
no more bidirectional edges can be added. Considering the same example given 
under unidirectional edges, we add a bidirectional edge between $SN_a 
\leftrightarrow SN_c$. The \textit{graph of supernodes} (GoSN) for this example 
is shown in Figure \ref{fig:qgrex}. Thus we completely capture the nesting of 
BGPs and OPTIONALs in a query using this GoSN.

\subsection{Nomenclature} \label{sec:nomen}
Next, we introduce nomenclatures with respect to the supernodes in a GoSN 
and the OPTIONAL patterns in a SPARQL query.

\textbf{Master-Slave:} In an OPTIONAL pattern $P_c \leftouterjoin P_d$,
we call pattern $P_c$ to be a \textit{master} of $P_d$, and $P_d$ a 
\textit{slave} of $P_c$. This master-slave relationship is \textit{transitive},
i.e., if a supernode $SN_f$ is \textit{reachable} from another supernode $SN_c$ 
by following \textit{at least} one unidirectional edge in GoSN  
($SN_c...\rightarrow...SN_f$), then $SN_c$ is  called a master of $SN_f$
(see Figure \ref{fig:qgrex}).

\textbf{Peers:} We call two supernodes to be peers if they are connected to each 
other through a bidirectional edge, or they can be \textit{reached} from each 
other by following \textit{only} bidirectional edges in GoSN, e.g., $SN_a$ and 
$SN_c$ in Figure \ref{fig:qgrex}.

\textbf{Absolute masters:} Supernodes that are not reachable from any other
supernode through a path involving any unidirectional edge are called the
\textit{absolute masters}, e.g., $SN_a$ and $SN_c$ in Figure \ref{fig:qgrex}
are absolute masters.

These master-slave, peer, and absolute master nomenclatures and relationships 
apply to any triple patterns enclosed within the respective supernodes too.

\textbf{Well-designed patterns:} As per the definition given by P\'{e}rez et al, 
a \textit{well-designed} OPT query is -- for every subpattern of type $P' = P_k 
\leftouterjoin P_l$ in the query, if a join variable $?j$ in $P_l$ appears 
outside $P'$, then $?j$ also appears in $P_k$. A query that violates this 
condition is said to be \textit{non-well-designed}.

For the scope of the text in this paper, we mainly consider 
\textit{well-designed} queries, because they occur most commonly for RDF graphs, 
and remain unaffected by the differences between SPARQL and SQL semantics over 
the treatment of \textit{NULLs}. Our previous text \cite{atresigmod15} discusses 
non-well-designed queries and their effect on the treatment of NULLs. We request 
the interested readers to refer to those (please see Appendices B and C in 
\cite{atresigmod15}).

\subsection{Graph of Triple Patterns (GoT)} \label{sec:transform}
During the GoSN construction, we only added connections between the supernodes 
formed out of the BGPs in a query based on the structural semantics of the 
given query.
Next we add \textit{labeled undirected} edges between triple patterns as 
follows. If two triple patterns share one or more join variables among them,
and are in direct master-slave hierarchy, or are part of the same supernode, we 
add an undirected edge between them. For instance, let $T_i$ and $T_j$ share 
a join variable $?j$. If $T_i \in SN_a, T_j \in SN_b, SN_a \rightarrow SN_b$, 
or if $T_i \in SN_a, T_j \in SN_a$, then we add an undirected edge between 
$T_i$ and $T_j$, with the edge label $?j$. Recall that the triple patterns 
encapsulated in the supernodes share the same master-slave or peer hierarchy as 
their respective supernodes.
These undirected edges among the triple patterns create a graph of 
triple patterns (GoT) \cite{atresigmod15}. The GoT for Q1 in Section 
\ref{sec:intro} is shown by ``red'' connecting edges in Figure 
\ref{fig:qgraph}. The edge labels in the GoT are not shown to avoid cluttering 
the figure.

\begin{definition}
If the graph of tables (GoT) of a BGP-OPT query is connected, then the query is 
free from any Cartesian joins, and is considered to be a \textbf{connected 
query}.
\end{definition}

E.g., following is an example of a \textit{non-connected} query (Cartesian 
join), because the triple pattern \textit{(?actor :livesIn :LA)} does not have 
any shared variable with the other two triple patterns \textit{(:Jerry 
:hasFriend ?friend)} and \textit{(?friend :name ?name)}.

\vspace{2mm}
\begin{lgrind}
{\small
\L{\LB{\K{SELECT}_?\V{friend}_?\V{name}_?\V{actor}}}
\L{\LB{\K{WHERE}_\{}}
\L{\LB{}\Tab{5}{:\V{Jerry}_:\V{hasFriend}_?\V{friend}_.}}
\L{\LB{}\Tab{5}{?\V{friend}_:\V{name}_?\V{name}_.}}
\L{\LB{}\Tab{5}{\K{OPTIONAL}_\{}}
\L{\LB{}\Tab{7}{?\V{actor}_:\V{livesIn}_:\V{LA}_.}}
\L{\LB{}\Tab{5}{\}\}}}
}
\end{lgrind}
\vspace{2mm}

Let us consider a subgraph of this GoT consisting of only the triple patterns 
encapsulated inside all the absolute master supernodes and all the undirected 
edges incident on them.

\begin{prop} \label{prop:got}
If a query is well-designed \textbf{and} connected, then the subgraph of GoT 
consisting only of triple patterns within the absolute masters is always 
connected. 
\end{prop}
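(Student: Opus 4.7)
The plan is to show that for any two triples $T_a, T_b$ lying in absolute master supernodes, there is a path between them using only triples from absolute masters, by ``lifting'' a GoT path between them with the help of the well-designed condition.

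First I would establish a structural observation about the GoSN itself. In the serialized query, the top-level structure after peeling away every outer OPT is an inner-join of absolute-master subpatterns $A_1 \Join \cdots \Join A_m$; hence the absolute master supernodes are pairwise peer-connected via bidirectional edges in the GoSN. Moreover, the subtrees rooted at the absolute masters (each $A_i$ together with all supernodes reachable from it via unidirectional edges) partition the supernodes, and no slave supernode is directly master-slave or peer with any supernode lying in a different subtree. A direct consequence is that every GoT edge joining two triples from different absolute master subtrees must in fact sit between two triples inside absolute master supernodes.

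Next, using connectedness of the GoT, I would fix any path $T_a = U_0, U_1, \ldots, U_n = T_b$ and split it into maximal segments lying entirely within a single absolute master subtree, with the transitions between segments necessarily being direct edges between absolute master triples (by the observation above). It then suffices to show that each within-subtree segment from an absolute master triple $X \in A$ to an absolute master triple $Y \in A$ can be replaced by a path inside $A$ itself.

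The final and main step is this within-subtree lifting, and it is where the well-designed hypothesis is used. I would induct on the nesting depth of the slave supernodes traversed in the segment. For each slave triple $U_i$ in $A$'s subtree with shared variable $?v$ to a neighbor on the path, I would repeatedly apply the well-designed condition to the chain of enclosing OPT subpatterns $P_k \leftouterjoin P_l$ containing $U_i$: since the segment eventually leaves back through an absolute master (or connects to triples beyond this segment where $?v$ reappears), $?v$ occurs outside the enclosing $P'$, so well-designed forces $?v$ into $P_k$; iterating pushes $?v$ all the way up into a triple of $A$. Chaining these lifted triples yields a valid path inside $A$ from $X$ to $Y$. The hardest part of the argument is handling the case when both neighbors of a slave triple on the segment also lie in the same slave region, so that the ``outside $P'$'' hypothesis of well-designed is not immediate; I would address this by replacing such a sub-segment with a one-step lift that uses the outermost exit variable of the segment, which by construction does appear outside, and then recursing on the reduced path.
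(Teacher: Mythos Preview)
The paper does not actually provide a proof of this property; it is stated without justification and used as a structural observation. So there is no ``paper's proof'' to compare against, and your attempt must be judged on its own.

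Unfortunately, your chaining step has a genuine gap. Consider the query
\[
\bigl(T_1(?a,?b)\ \Join\ T_2(?c,?d)\bigr)\ \leftouterjoin\ T_3(?b,?c),
\]
where the single absolute master supernode is $A=\{T_1,T_2\}$ and the slave is $\{T_3\}$. The GoT is the path $T_1 -_{?b} T_3 -_{?c} T_2$, so the query is connected in the paper's sense. The only OPT subpattern $P'=P_k\leftouterjoin P_l$ is the whole query, and since nothing lies \emph{outside} $P'$, the well-designed condition is vacuously satisfied. Now apply your lifting scheme to the segment $T_1,T_3,T_2$: the variable $?b$ on the first edge is witnessed in $A$ by $T_1$, and $?c$ on the second edge is witnessed in $A$ by $T_2$, but these two lifted witnesses share no variable, so they do \emph{not} form a path in $A$. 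Your argument assumes that successive lifted triples chain up, yet well-designedness only guarantees that each individual variable of $P_l$ that occurs outside $P'$ reappears somewhere in $P_k$; it never promises a single triple of $P_k$ containing two consecutive path variables, nor any connectivity among the witnesses.

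In fact this example shows that the property, read literally with the paper's definitions, is false: the induced GoT on the absolute master $\{T_1,T_2\}$ is disconnected. The intended reading almost certainly carries an extra implicit hypothesis (for instance, that no individual BGP contains an internal Cartesian product, i.e., each supernode's own GoT is connected), and your proof would need such an assumption as well. Even with that added hypothesis, you would still have to argue carefully that the lifted witnesses inside a master supernode can be connected \emph{within} that supernode, which is a different statement from what the well-designed condition provides.
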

\begin{prop}\label{prop:uniedges}
In a well-designed connected query, a slave supernode never has more than one 
incoming unidirectional edge.
\end{prop}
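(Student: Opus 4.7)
The plan is to prove the proposition as a purely structural fact about the expression tree of the serialized BGP-OPT query. Leaves of this tree are exactly the OPT-free BGPs, i.e., the supernodes, and each internal node is either $\Join$ or $\leftouterjoin$. By the rules of Section \ref{sec:gosn}, every $\leftouterjoin$ internal node contributes exactly one unidirectional edge, oriented from the leftmost leaf of its left subtree to the leftmost leaf of its right subtree. Consequently, the incoming unidirectional edges at a slave supernode $SN_b$ are in bijection with the ancestors $A$ of $SN_b$ (possibly $A = SN_b$ itself) for which (i) $SN_b$ is the leftmost leaf of the subtree rooted at $A$, and (ii) $A$ is the right child of a $\leftouterjoin$ node. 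The goal is to show that at most one such $A$ exists.

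The key observation I would formalize first is a simple chain property of the leftmost-leaf relation: $SN_b$ is the leftmost leaf of a subtree $A$ if and only if every edge on the path from the root of $A$ down to $SN_b$ leads to a left child. Therefore the set of ancestors satisfying condition (i) forms a contiguous chain $SN_b = A_0, A_1, \ldots, A_k$, with each $A_i$ the left child of $A_{i+1}$, and the chain terminating at $A_k$ precisely because $A_k$ is either the root of the whole tree or a right child of its parent.

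Condition (ii) then forces $A = A_k$: for every $i < k$ the subtree $A_i$ is a left child of $A_{i+1}$ by the chain condition, so none of these $A_i$ can simultaneously be a right child of a $\leftouterjoin$; only the topmost element $A_k$ of the chain has the option of being a right child. Hence at most one ancestor of $SN_b$ meets both (i) and (ii), which is exactly the statement that $SN_b$ has at most one incoming unidirectional edge in the GoSN.

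The main obstacle, such as it is, lies in cleanly identifying the construction's ``leftmost OPT-free BGP nested inside $P_m$'' with the leftmost-leaf operator on the expression tree, including the corner clause that treats $P_m$ itself as leftmost when it nests no further OPT-free BGPs. Once that identification is made the tree argument above goes through for any serialized BGP-OPT query. In particular the hypotheses of well-designedness and connectedness are not used by the core chain argument; I would simply note that they are inherited from the surrounding setting of Section \ref{sec:gosn} and are needed for other propositions built on top of this one (e.g.\ Proposition \ref{prop:got}), rather than for the present structural claim.
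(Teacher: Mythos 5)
Your argument is correct. Note that the paper itself states Property \ref{prop:uniedges} without proof (it is asserted and then immediately used to justify coalescing the absolute masters into $SN_{abs}$), so there is no in-paper proof to compare against; your expression-tree argument supplies exactly the missing justification. The key points all check out: each occurrence of $\leftouterjoin$ in the serialized query contributes exactly one unidirectional edge whose target is the leftmost OPT-free BGP of its right operand, so an incoming edge at $SN_b$ corresponds to an ancestor-or-self $A$ that is the right child of a $\leftouterjoin$ and has $SN_b$ as leftmost leaf; the set of subtrees having $SN_b$ as leftmost leaf is the chain obtained by following first-child (``leftmost'') edges upward from $SN_b$; and only the top element of that chain can be a non-first child of its parent, hence at most one such $A$ exists (equivalently, two right operands sharing $SN_b$ as leftmost leaf must coincide, forcing the two $\leftouterjoin$ nodes to coincide). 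Two minor remarks: if the serialization keeps $\Join$ as an $n$-ary node rather than binary, the same argument works verbatim with ``left child'' read as ``first child''; and your observation that well-designedness and connectedness are not actually used is accurate --- the property is purely syntactic, and those hypotheses are needed for Property \ref{prop:got} rather than for this one.
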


\begin{figure}
\centering
 \includegraphics[scale=0.4]{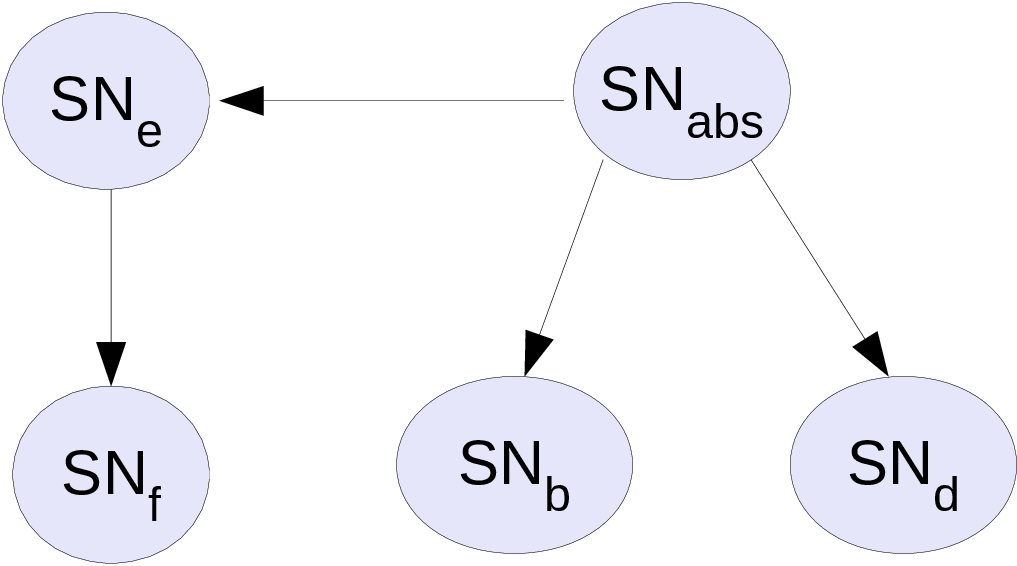}
 \caption{GoSN of Figure \ref{fig:qgrex} after coalescing absolute 
masters}\label{fig:aftcoalesce}
\end{figure}

Observing Properties \ref{prop:got} and \ref{prop:uniedges}, we coalesce all 
the absolute masters of GoSN to form a single absolute master supernode of the 
GoSN -- $SN_{abs}$. While doing so, we remove any bidirectional edges incident 
on the coalesced absolute master supernodes. For every unidirectional edge 
between a coalesced absolute master and its slave, a unidirectional edge is 
added between $SN_{abs}$ and the corresponding slave. Figure 
\ref{fig:aftcoalesce} shows the transformed GoSN of the original GoSN shown in 
Figure \ref{fig:qgrex}, after coalescing absolute masters $SN_a$ and $SN_c$. 
\fxnote{Draw figure here.}

\section{Acyclicity and Minimality} \label{sec:prelim}
In this section, first we define the \textit{acyclicity} of a SPARQL 
(equivalently SQL) query, and then discuss the \textit{minimality} of triples 
(tuples), and the effect of the cyclic properties of a query on it.

\subsection{Acyclicity of Queries} \label{sec:acyclic}
For the concept of acyclicity of a query we take into consideration the 
graph of tables (GoT). We define the equivalence classes of edges of GoT as 
follows.

\begin{definition} \label{def:eqvclass}
For every triple pattern, its incident edges in GoT are put in 
\textbf{equivalence classes} such that all the edges in a given equivalence 
class have either same edge labels or their edge labels are subset of some 
other edge's label in the same class. E.g., if $T_i$ has three edges incident 
on it with labels $\{?a\}$, $\{?a, ?b\}$, $\{?c\}$, then $\{\{?a\}, \{?a, 
?b\}\}$ make one equivalence class and $\{\{c\}\}$ in another. 
\end{definition}

A triple pattern is called a \textit{leaf} if it has only one equivalence class 
among its incident edges. An acyclic query is then defined as follows. If we 
recursively remove leaf triple patterns, and edges incident on them from a GoT, 
and then if we are left with an empty GoT at the end, then the query is 
acyclic. The set of leaf triple patterns are chosen recursively in each round 
after previous leaves and their incident edges are removed. This process is 
reminiscent of \textit{GYO-reduction} \cite{ullman}. GYO-reduction assumes a 
\textit{hypergraph} where each attribute in a table is a node, and a 
\textit{hyperedge} represents a table. However, to be consistent with our 
representation of GoT and GoSN, we have formulated this definition of 
acyclicity instead of using GYO-reduction.

\subsection{Minimality of triples}
The triples associated with a triple pattern (or tuples in a table) are said to 
be \textit{minimal} for the given join (BGP or BGP-OPT) query, if every triple 
(tuple) is part of one or more final join results of the query. There does not 
exist any triple that gets eliminated as a result of its join with another 
triple (associated with another triple pattern).
Consider the same query given in Figure \ref{fig:qgraph}, along with the sample
data associated with it in Figure \ref{fig:nullbest}. $T_2$ \textit{?friend 
:actedIn ?sitcom} has five triples associated with it -- (1) \textit{:Larry 
:actedIn :CurbYourEnthu}, (2) \textit{:Julia :actedIn :Seinfeld}, (3) 
\textit{:Julia :actedIn :Veep}, (4) \textit{:Julia :actedIn 
:NewAdvOldChristine}, (5) \textit{:Julia :actedIn :CurbYourEnthu}. But they are 
\textit{not minimal} for this query, because after $T_2$'s join with $T_3$ 
(\textit{?sitcom :location :NYC}), all tuples but \textit{:Julia 
:actedIn :Seinfeld} associated with $T_2$ get eliminated.

\section{Nullification and Best-match} \label{sec:nullbm}
\fxnote{Draw a block diagram of queries covered by BitMat algo including UNIONs 
etc.}

\begin{figure*}[t]
    \centering
        \includegraphics[width=6in,height=3.2in]{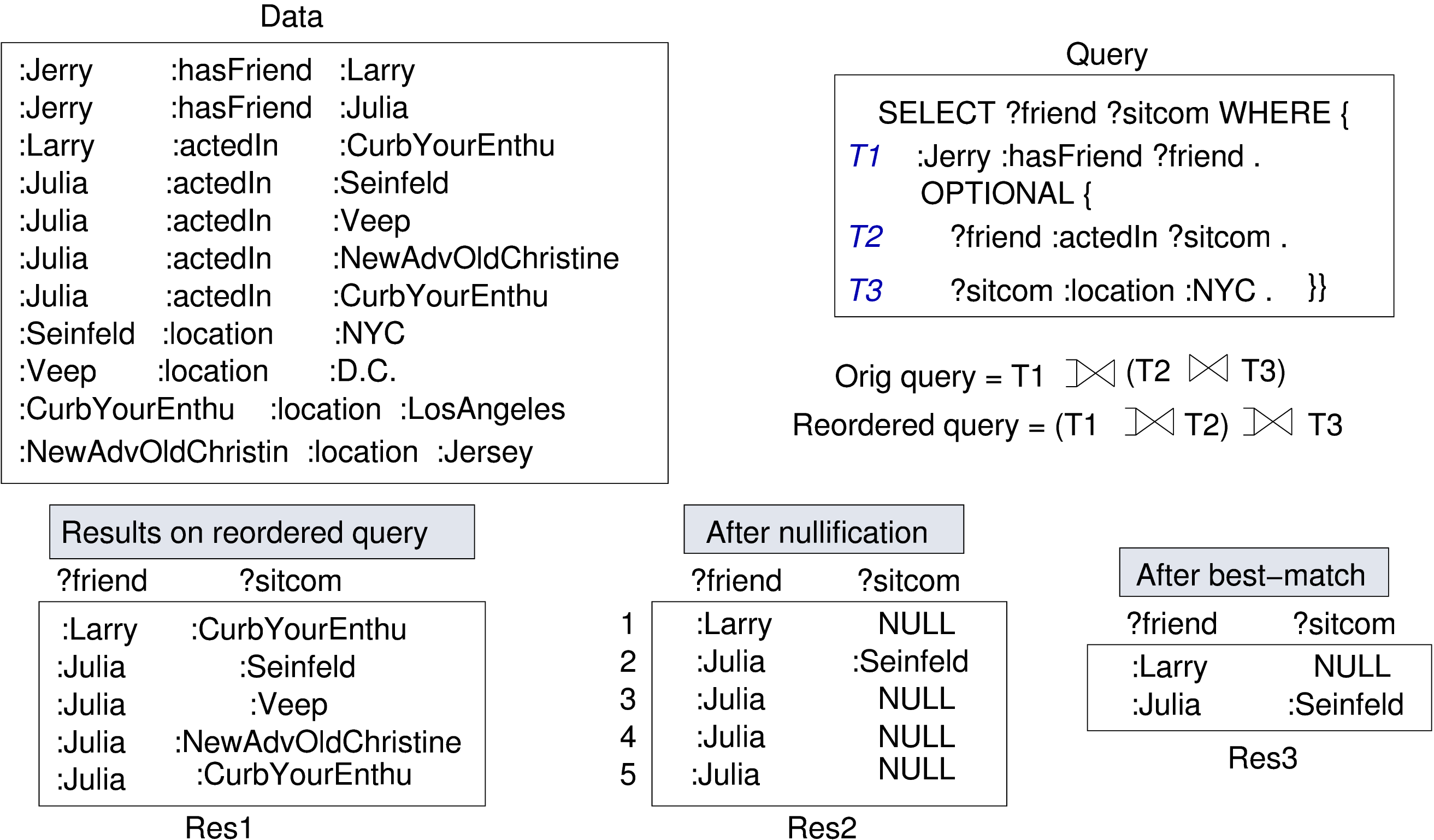}
       \caption{Nullification and best-match example} \label{fig:nullbest}
\end{figure*}

SPARQL BGP queries are analogous to the SQL inner-join queries, and hence the 
joins over the triple patterns in BGP queries are \textit{associative} and 
\textit{commutative}, i.e., a change in the order of joins between the triple 
patterns does not change the query results. SPARQL queries with OPTIONAL 
patterns are analogous to the SQL left-outer-joins, and hence they are 
\textit{not} 
associative or commutative. An example of such non-reorderable query is given 
in Section \ref{sec:intro}. However, reorderability of joins is a powerful 
feature that enables the query optimizer to explore many query plans. Hence, Rao 
et al and Galindo-Legaria, Rosenthal proposed ways of \textit{reordering} 
intermixed inner and left-outer joins by using additional operators 
\textit{nullification} and \textit{best-match} (or \textit{Generalized 
Outerjoin}) \cite{galindo-legaria1,galindo-legaria2,rao2,rao1}.

For the completeness of the text, first we will briefly see how nullification 
and best-match operators work with the same example as given in 
\cite{atresigmod15}. For more details of these operators, we request the 
interested readers to refer to 
\cite{rao1,rao2,galindo-legaria1,galindo-legaria2}.
Consider the same query given in Figure \ref{fig:qgraph}, along with the sample
data associated with it in Figure \ref{fig:nullbest}.
\textit{:NYC} has been the location for a lot of American sitcoms,
and a lot of actors have acted in them (they are not shown in the sample data 
for conciseness). But, among all such actors, \textit{:Jerry} has only two 
friends, \textit{:Julia} and \textit{:Larry}. Hence, $T_1$ is more 
\textit{selective} than $T_2$ and $T_3$. A left-outer-join reordering 
algorithm as proposed in \cite{rao1,galindo-legaria2} will typically reorder 
these joins as $(T_1 \leftouterjoin T_2) \leftouterjoin T_3$.
Due to this reordering, all four sitcoms that \textit{:Julia} has acted in show 
up as the bindings of \textit{?sitcom} (see Res1 in Fig. \ref{fig:nullbest}),
although only \textit{:Seinfeld} was located in the \textit{:NYC}.
To fix this, \textit{nullification} operator is used, which ensures that 
variable bindings across the reordered joins are consistent with the original 
join order in the query (see Res2 in Figure \ref{fig:nullbest}).

The \textit{nullification} operation caused results that are \textit{subsumed} 
within other results. A result $r_1$ is said to be subsumed within another 
result $r_2$ ($r_1 \sqsubset r_2$), if for every non-null variable binding in 
$r_1$, $r_2$ has the same binding, and $r_2$ has more non-null variable 
bindings 
than $r_1$. Thus results 3--5 in Res2 are subsumed within result 2. The 
\textit{best-match} operator (or \textit{mimumum union} as defined by 
Galindo-Legaria in \cite{galindosigmod}) removes all the subsumed results (see 
Res3). Final results of the query are given as 
\textit{best-match(nullification($(T_1 \leftouterjoin T_2) \leftouterjoin 
T_3$))}.

\subsection{Nullification, Best-match, and Minimality of triples} 
\label{sec:nullmin}
In \cite{atresigmod15}, we made an important observation that if every triple 
pattern in an OPTIONAL pattern query has \textit{minimal} triples associated 
with it, then nullification and best-match operations are not required (ref. 
Lemma 3.1 in \cite{atresigmod15}). We made yet another observation through 
Lemmas 3.3 and 3.4 in \cite{atresigmod15}, that for the following two classes 
of the OPTIONAL pattern queries nullification and best-match is not required if 
we use Algorithm-1 in \cite{atresigmod15} to reduce the set of triples 
associated with each triple pattern in the query, prior to generating final 
results using \textit{multi-way pipeline join} algorithm (Algorithm-3 in 
\cite{atresigmod15}).
\begin{itemize}
 \item \textit{Acyclic GoT}: OPTIONAL pattern queries whose GoT is acyclic.
 \item \textit{Only one join variable per slave}: OPTIONAL pattern queries 
where there are cycles in the GoT, but any slave triple pattern has only one 
join variable in it, and that join variable is shared with its master triple 
pattern.
\end{itemize}

These classes of OPTIONAL pattern queries are considered \textit{good} because 
they can avoid the overheads of the nullification and best-match operations 
despite the reordering of inner and left-outer joins.
The important premise of these observations is that Algorithm-1 reduces the 
triples associated with each triple pattern in the OPTIONAL pattern query in 
such a way that even if we reorder the inner and left-outer joins while doing 
the multi-way pipelined join, it does not generate spurious results, and thus 
avoids the necessity of nullification and best-match.
We extend this class of \textit{good} OPTIONAL pattern cyclic queries beyond 
the ones in which all slaves have only one join variable, and these are 
discussed in Section \ref{sec:cyclic}. Before that we revisit our pruning 
method and multi-way joins to work with just GoT, and obviate the need of 
\textit{graph of join variables} (GoJ) that was used in our previous work 
\cite{atresigmod15}.

\section{Pruning Triples} \label{sec:prune}
SPARQL (in turn SQL) queries can be evaluated using different equivalent plans. 
All the plans output exact same results. Typically, a plan with the least cost 
is chosen for evaluation. In the previous sections, we 
established the relationship between structural properties of a BGP-OPT query, 
minimality of triples, and nullification, best-match operations. However, we get 
the benefit of avoiding nullification and best-match, only if the tuples 
associated with the query \textit{before} performing the reordered joins are in 
the minimal (or favorable) form. We ensure that by using the \textit{pruning} 
step before performing joins. The \textit{pruning} phase only prunes the 
triples associated with each triple pattern in the query using a series of 
\textit{semi-joins}, and the \textit{multi-way pipelined joins} then produce 
the join results in a pipelined fashion, followed by nullification and 
best-match if required. 

\subsection{Semi-joins} \label{sec:semijoins}
Pruning of triples without performing joins is achieved through 
\textit{semi-joins}.
Semi-joins can be notationally represented as follows. $T_2 \ltimes_{?j} T_1 = 
\{t\ |\ t \in T_2, t.?j \in (\pi_{?j}(T_1) \cap \pi_{?j}(T_2))\}$. Here $t$ is a 
triple matching $T_2$, and $t.?j$ is a variable binding (value) of variable $?j$ 
in $t$. After this semi-join, $T_2$ is left with only triples whose $?j$ 
bindings are also in $T_1$, and all other triples are removed.

Bernstein et al \cite{semij1,semij2} and Ullman \cite{ullman} have proved 
previously that if the \textit{graph of tables} (GoT) of an inner-join query 
is \textit{acyclic}, a bottom-up followed by a top-down pass with 
\textit{semi-joins} at each table in this tree, reduces the set of 
tuples in each table to a minimal. Note that for the discussion of minimality 
of triples and semi-joins, we focus on the GoT of a query, and not GoSN. GoSN 
inherently encapsulates GoT inside it, since GoSN maintains connections between 
BGP blocks, whereas GoT maintains connections between individual triple 
patterns.

In our previous work, we proposed a pruning algorithm for BGP-OPT queries that 
makes use of \textit{graph of join variables} (GoJ) and 
\textit{clustered-semi-joins}. However, in this paper we propose an improved 
algorithm. Our algorithm is reminiscent of the \textit{full reducer} semi-join 
sequence as given in \cite{ullman}, that uses the concept of graphs with 
\textit{hyperedges} to represent tables (triple patterns for SPARQL). But 
full-reducers only addressed the inner-joins, and we address an intermix of 
inner and left-outer joins. We first discuss it in Algorithm-\ref{alg:prune}, 
and then discuss its differences from our previous Algorithm-1 in 
\cite{atresigmod15}.

\begin{algorithm}[h]
\small{
\SetKw{return}{return}
\SetKwFunction{greedy}{greedy-semij-order}
\SetKwFunction{ordsn}{order-supernodes}
\SetKwFunction{tpsn}{get-tps-in-SN}
\SetKwFunction{leaves}{one-eqv-class}
\SetKwFunction{mincost}{mincost}
\SetKwFunction{semij}{semi-join}
\SetKwInOut{Input}{input}
\Input{GoSN}
sn-order = \ordsn{GoSN}\; \label{ln:orderslaves}
\If{GoT cyclic AND cycles in slaves}{ \label{ln:cyclic}
$order_{greedy}$ = \greedy{sn-order}\;\label{ln:greedyord}
  \For{each $T_i \ltimes T_p$ in $order_{greedy}$}{
    \semij($T_i$, $T_j$)\tcp*{Alg \ref{alg:semij}}
  }
\return
}\label{ln:endcyclic}
\ElseIf{GoT cyclic AND only $SN_{abs}$ cyclic}{ \label{ln:partcycle}
$order_{greedy}$ = \greedy{$SN_{abs}$}\;
  \For{each $T_i \ltimes T_p$ in $order_{greedy}$}{
    \semij($T_i$, $T_j$)\tcp*{Alg \ref{alg:semij}}
  }
  sn-order.remove($SN_{abs}$)\; \label{ln:partcycleend}
} 
\For{each supernode $SN_i$ in sn-order}{\label{ln:eachsn}
  list \textit{tp-sn} = \tpsn{$SN_i$}\;
  \For{each $T_i$ in tp-sn}{ \label{ln:eachtp}
    Create equiv classes of edges incident on $T_i$ in $SN_i$\;
  }
  \While{tp-sn not empty}{
    list \textit{tp-leaves} = \leaves{$SN_i$}\; \label{ln:leaves}
    $T_i$ = \mincost{tp-leaves}\; \label{ln:mincost}
    \If{$SN_i$ is slave}{ \label{ln:ifslave}
      \For{each $T_m$ master TP of $T_i$}{
      \tcp{$T_m$ and $T_i$ share a join variable}
	$order_{bu}$.append($T_i \ltimes T_m$)\; \label{ln:transfer}
      }
    }\label{ln:endifslave}
    $T_j$ = mincost neighbor of $T_i$\; \label{ln:mincostnb}
    $order_{bu}$.append($T_j \ltimes T_i$)\; \label{ln:append}
    remove $T_i$ and its edges from consideration\;
  }
  \For{each $T_i \ltimes T_p$ in $order_{bu}$}{
    \semij($T_i$, $T_j$)\tcp*{Alg \ref{alg:semij}}
  }
  $order_{td}$ = $order_{bu}$.reverse()\; \label{ln:td}
  Remove semi-joins $T_m \ltimes T_i$ from $order_{td}$\; \label{ln:remove}
  \For{each $T_i \ltimes T_p$ in $order_{td}$}{
    \semij($T_i$, $T_j$)\tcp*{Alg \ref{alg:semij}}
  }
}\label{ln:eachsnend}
\return
}
\caption{\texttt{prune\_triples}}\label{alg:prune}
\end{algorithm}

\ref{alg:prune}. 
The working of the algorithm is described as follows. We first order all the 
supernodes in the GoSN of a query according to the master-slave hierarchy -- 
masters before their respective slaves, and among any two supernodes not in 
the master-slave hierarchy, we randomly pick one over another. Note that since 
these are \textit{well-designed} queries and we have coalesced all the absolute 
master supernodes into $SN_{abs}$, the relative ordering among two such 
supernodes not in the master-slave hierarchy does not matter. We call this 
ordering among the supernodes \texttt{sn-order} as given by the 
\texttt{order-supernodes} function (ln \ref{ln:orderslaves}), and $SN_{abs}$ 
appears first in this order.
If GoT is cyclic and there are cycles in slave supernodes too, then we just do 
pruning by following a greedy order of semi-joins  -- doing semi-joins over 
highly selective triple patterns first -- honoring the master-slave hierarchy 
among the triple patterns (\ref{ln:cyclic}--\ref{ln:endcyclic}), and end the 
pruning process there. In this case nullification and best-match are necessary 
after \texttt{multi-way-joins}. However, if GoT is cyclic but the cycles are 
confined only to $SN_{abs}$, and GoTs of slaves are acyclic, we consider a 
greedy order of semi-joins over the triple patterns in $SN_{abs}$, prune 
the triples in $SN_{abs}$, and remove $SN_{abs}$ from \texttt{sn-order} (ln 
\ref{ln:partcycle}--\ref{ln:partcycleend}).

\fxnote{Give a proper definition of eqv classes.}
Then, starting with the next supernode, $SN_i$, in \texttt{sn-order} (ln 
\ref{ln:eachsn}) we consider all the triple patterns encapsulated within that 
supernode.
For each triple pattern $T_i$ in $SN_i$, and its connected triple patterns in 
$SN_i$ alone, we create \textit{equivalence classes} of edges of GoT incident on 
$T_i$ (recall the definition of \textit{equivalence class} of edges given in 
Section \ref{sec:prelim}). We do not consider edges that connect $T_i$ with 
triple patterns outside $SN_{i}$. After doing this for all the triple patterns 
in $SN_i$, we pick the triple patterns that have only \textit{one equivalence 
class} among its edges (ln \ref{ln:leaves}). This triple pattern is a 
\textit{leaf node}. Among all such leaf nodes, we pick the one which has the 
\textit{least} number of triples associated with it -- most \textit{selective} 
one. In case of a tie, we pick one randomly (ln \ref{ln:mincost}).

Among leaf $T_i$'s connected triple patterns within $SN_i$, we pick the 
neighbor with the least number of triples associated with it, say $T_j$ (ln 
\ref{ln:mincostnb}), and add a semi-join $T_j \ltimes T_i$ to the queue 
$order_{bu}$ (bottom-up semi-join order) (ln \ref{ln:append}).
If $SN_i$ is a slave supernode, we ensure to fetch its master's variable binding 
restrictions as follows. If $SN_b$ is a master of $SN_i$, such that $SN_b 
\rightarrow SN_i$, we \textit{transfer the constraints on the variable bindings} 
from $SN_b$ to $SN_i$ as follows. Without losing generality, while adding a 
semi-join between two triple patterns $T_j \ltimes T_i$ in $SN_i$ to 
$order_{bu}$, we first check if $T_i$ has a neighbor $T_m$ in $SN_b$. If it 
does, then we add $T_i \ltimes T_m$ to $order_{bu}$ before $T_j \ltimes T_i$. 
This ensures that any variable bindings imposed by a master of the triple 
patterns are transferred to their respective slave triple patterns (ln 
\ref{ln:transfer}). Next we remove $T_i$ and all the edges incident on it from 
consideration, and repeat the same procedure described above with the rest of 
the triple patterns in $SN_{i}$.

Notice that through this procedure we recursively define a spanning tree over 
the graph of triple patterns (GoT) encapsulated within $SN_{i}$, and make a 
bottom-up pass over it -- every semi-join $T_j \ltimes T_i$ denotes $T_j$ to be 
the parent of $T_i$ in the spanning tree, and the semi-join denotes the 
\textit{direction of the walk} from the leaves to the internal nodes and the 
root of the tree, e.g., semi-join $T_j \ltimes T_i$ denotes a walk from $T_i 
\rightarrow T_j$ on the spanning tree. For the top-down pass, we simply reverse 
the queue $order_{bu}$ to get $order_{td}$ (ln \ref{ln:td}). That is for every 
semi-join of type $T_j \ltimes T_i$ in $order_{bu}$, we add $T_i \ltimes T_j$ 
to $order_{td}$. However, we omit all the semi-joins of type $T_m \ltimes T_i$  
where $T_m$ is a master of $T_i$ (ln \ref{ln:remove}).
When the GoT is acyclic, the very first $SN_i$ is always $SN_{abs}$.

The main differences between our previously proposed technique (ref Algorithms 
3.1, 3.2 in \cite{atresigmod15}), and Algorithm \ref{alg:prune} here are as 
follows:
\begin{itemize}[label=$\bullet$]
 \item In the present technique, we form a rooted tree over GoT, whereas in 
\cite{atresigmod15}, the rooted tree was formed over \textit{graph of join 
variables} (GoJ).
 \item In our present technique the rooted tree over GoT is formed in a 
\textit{bottom-up} fashion, where we first pick the \textit{most selective} 
triple patterns as the leaves and recursively build the internal nodes and root 
of the tree. In 
\cite{atresigmod15}, the rooted tree over GoJ was formed in a \textit{top-down} 
fashion, where we first picked a join variable with \textit{lowest selectivity} 
as the root of the tree, and then recursively picked the internal nodes and 
leaves.
 \item In \cite{atresigmod15}, we first did a bottom-up pass over \textit{all} 
the induced GoJs of individual supernodes, and then did the top-down passes.  
In the present technique, we do the bottom-up and top-down passes on the triple 
patterns of each supernode in one go. Thus we do the full possible pruning of 
the triples associated with the triple patterns in a supernode, before moving on 
to its slaves. This helps in better pruning of the slaves, because their 
respective masters are already in the pruned state.
\end{itemize}

Our discussion of the properties of nullification, cyclicity of queries, 
minimality of triples, and the pruning procedure until now has been agnostic to 
the lower level storage structure and indexes on the RDF graph. Indeed, our 
propositions and algorithms presented earlier can work on 
any storage and index structure -- only the \textit{semi-joins} procedure used 
in Algorithm-\ref{alg:prune} will change depending on the storage structure and 
indexes. Nevertheless, an \textit{efficient} storage and index structure helps 
in achieving better performance. We achieve this through the usage of compressed 
bitvector indexes on RDF graphs, and procedures that directly work on these 
compressed indexes without decompressing them as proposed in 
\cite{bitmatwww10,atresigmod15}. For the completeness of the text, these are 
described in Section \ref{sec:index}.

\section{BitMat Indexes}\label{sec:index}

In an RDF graph, let $V_s$, $V_p$, and $V_o$ be the sets of unique subject, 
predicate, and object values. Then a 3D bitcube of RDF data has $V_s \times V_p 
\times V_o$ dimensions. Each cell in this bitcube represents a unique RDF 
triple formed by the coordinate values (S P O). If this (S P O) triple is 
present in the given RDF dataset, that bit is set to 1 in the bitcube.
The unique values of subjects, predicates, and objects in the original RDF 
data are first mapped to integer IDs, which in turn are mapped to the bitcube 
dimensions. To facilitate joins on S-O dimensions, same S and O values are 
mapped to the same coordinates of the respective 
dimensions\footnote{\scriptsize{For the scope of this paper, we do not consider 
joins on S-P or O-P dimensions.}}.

Let $V_{so} = V_s \cap V_o$. Set $V_{so}$ is mapped to a sequence of integers 1 
to $|V_{so}|$. Set $V_s - V_{so}$ is mapped to a sequence of integers $|V_{so}| 
+ 1$ to $|V_s|$. Set $V_o - V_{so}$ is mapped to a sequence of integers 
$|V_{so}| + 1$ to $|V_o|$, and set $V_p$ is mapped to a sequence of integers 1 
to $|V_p|$.

This bitcube is conceptually sliced along each dimension, and the 2D 
\textit{BitMats} are created. In general, four types of 2D BitMats are created: 
(1) S-O and O-S BitMats by  slicing the P-dimension (O-S BitMats are nothing 
but a \textit{transpose} of the respective S-O BitMats), (2) P-O BitMats by 
slicing the S-dimension, and (3) P-S BitMats by slicing the O-dimension. 
Altogether we store $2*|V_p| + |V_s| + |V_o|$ BitMats for any RDF data. Figure 
\ref{fig:bitcube} shows 2D S-O BitMats that we can get by slicing the predicate 
dimension (others are not shown for conciseness).

Each row of these 2D BitMats is compressed as follows. In the 
run-length-encoding, a bit-row like ``11100-11110'' is represented as ``[1] 3 2 
4 1'', and  ``0010010000'' is represented as ``[0] 2 1 2 1 4''. Notably, in the 
second case, the bit-row has only two set bits, but it has to use \textit{five} 
integers in the compressed representation. So we use a \textit{hybrid} 
representation in our implementation that works as follows -- if the number of 
set bits in a bit-row are less than the number of integers used to represent it, 
then we simply store the set bit positions. So ``0010010000'' will be compressed 
as ``3 6'' (3 and 6 being the positions of the set bits). This hybrid 
compression fetches us as much as 40\% reduction in the index space compared to 
using only run-length-encoding as done in \cite{bitmatsrc}.

Other meta-information such as, the number of triples, and condensed 
representation of all the non-empty rows and columns in each BitMat, is also 
stored along with each BitMat. This information helps us in quickly determining 
the number of triples in each BitMat and its selectivity without counting each 
triple in it, while processing the queries. A 2D S-O or O-S BitMat of predicate 
\textit{:hasFriend} represents all the triples matching a triple pattern of kind 
(\textit{?a :hasFriend ?b}), a 2D P-S BitMat of O-value \textit{:Seinfeld} 
represents all triples matching triple pattern (\textit{?c ?d :Seinfeld}), and 
so on.

\begin{figure}[h]
    \centering
	\includegraphics[scale=0.265]{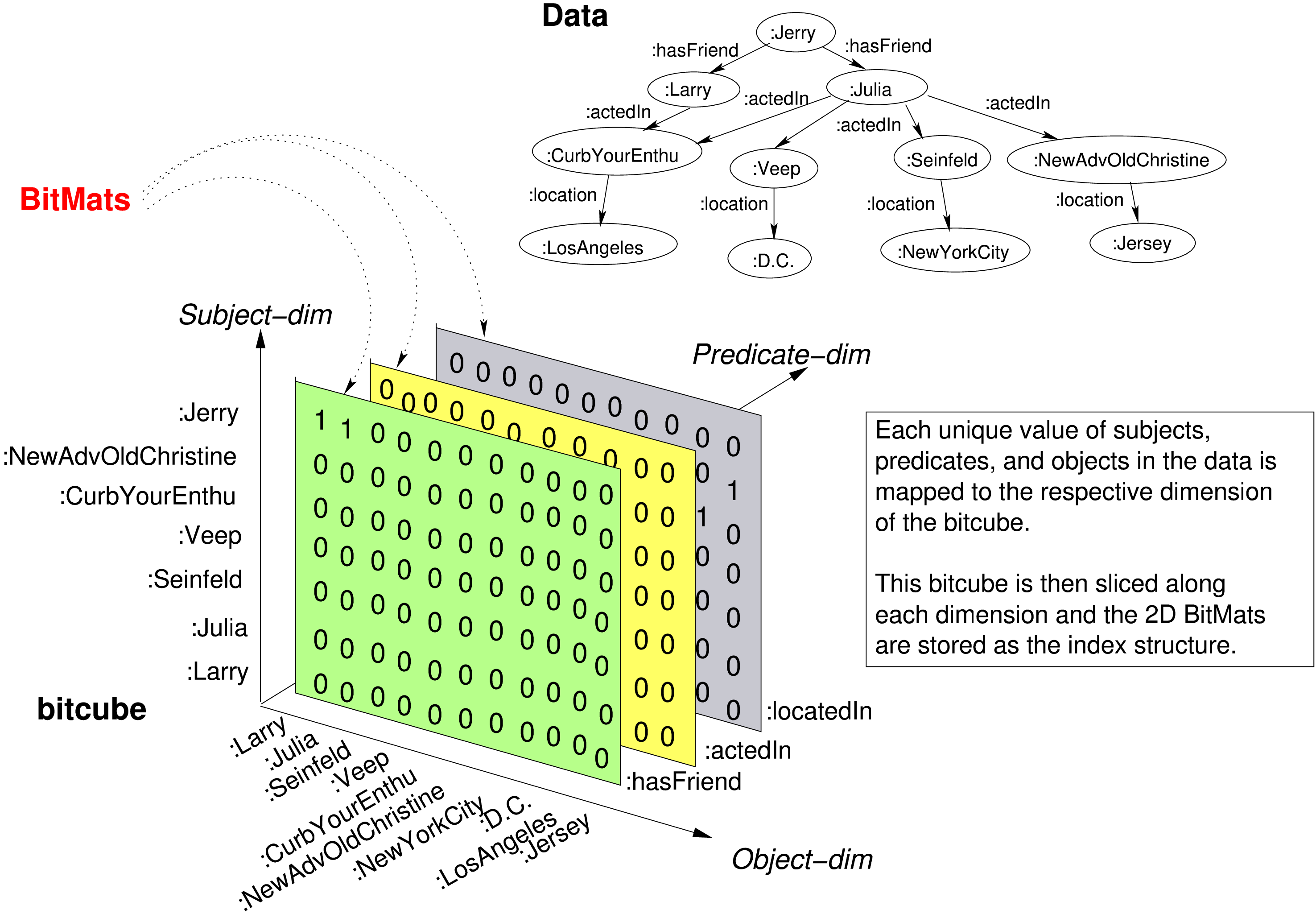}
        \caption{3D Bitcube of RDF data in Figure \ref{fig:nullbest}}
    \label{fig:bitcube}
\end{figure}

Query execution uses the \textit{fold} and \textit{unfold} primitives, which 
process the compressed BitMats without uncompressing them \cite{bitmatwww10}.

\textbf{\textit{Fold}} operation is represented as `\texttt{fold(BitMat, 
Re-\\tainDimension) returns bitArray}'. It takes a 2D BitMat and 
\textit{folds} it by retaining the \texttt{RetainDimension}.
More succinctly, a fold operation is nothing but \textit{projection} of 
the distinct values of the particular BitMat dimension, by doing a bitwise OR 
on the other dimension. It can be represented as:
\[
\mathtt{fold(BM_{T_i}, dim_{?j})} \equiv \pi_{?j}(BM_{T_i})
\]

$BM_{T_i}$ is a 2D BitMat holding the triples matching $T_i$, and $dim_{?j}$ is 
the dimension of BitMat that represents variable $?j$ in $T_i$.
E.g., for a triple pattern (\textit{?friend :actedIn ?sitcom}), if we consider 
the O-S BitMat of predicate \textit{:actedIn}, \textit{?friend} values are in 
the \textit{column} dimension, and \textit{?sitcom} values are in the 
\textit{row} dimension of the BitMat.

\textbf{\textit{Unfold}} is represented as `\texttt{unfold(BitMat, 
MaskBit\\Array, RetainDimension)}'. For every bit set to 0 in the 
\texttt{MaskBitArray}, unfold clears all the bits corresponding to that  
position of the \texttt{RetainDimension} of the BitMat. Unfold  can be simply 
represented as:
\[
\mathtt{unfold(BM_{T_i}, \beta_{?j}, dim_{?j})} \equiv \{t\ |\ t \in BM_{T_i}, 
t.?j \in \beta_{?j}\}
\]

$t$ is a triple in $BM_{T_i}$ that matches $T_i$. $\beta_{?j}$ is the 
\texttt{MaskBitArray} containing bindings of $?j$ to be retained. $dim_{?j}$ is 
the dimension of $BM_{T_i}$ that represents $?j$, and $t.?j$ is a binding of 
$?j$ in triple $t$. In short, \texttt{unfold} keeps only those triples whose 
respective bindings of $?j$ are set to 1 in $\beta_{?j}$, and removes all other.

Until now we discussed semi-joins only notationally as $T_i \ltimes_{?j} T_k 
= \{t\ |\ t \in T_i, t.?j \in \{\pi_{?j}(T_i) \cap \pi_{?j}(T_k)\}\}$. 
Semi-joins can be implemented using the BitMat indexes and the \texttt{fold, 
unfold} primitives as given in Algorithm \ref{alg:semij}.

\begin{algorithm}[h]
\small{
\SetKwFunction{fold}{fold}
\SetKwFunction{unfold}{unfold}
\SetKw{return}{return}
\SetKwInOut{Input}{input}
\Input{$T_i$, $T_k$, $?j$}
\BlankLine
\tcp{$T_i \ltimes T_k$}
$\beta_?j$ = \fold{$BM_{T_i}$, $dim_{?j}$} AND \fold{$BM_{T_k}$, 
$dim_{?j}$}\; 
\label{ln:collect}
\unfold{$BM_{T_i}$, $\beta_{?j}$, $dim_{?j}$}\; \label{ln:distr}
}
\caption{\texttt{semi-join}}\label{alg:semij}
\end{algorithm}

\section{Multi-way Pipelined Joins}
In the previous sections, we established the algorithmic basis of our techniques 
for understanding the characteristics of a BGP-OPT query and pruning the 
triples associated with the each 
triple pattern in the query \textit{before} generating the final results -- 
Algorithm \ref{alg:prune} only \textit{prunes} the candidate RDF triples, 
but does not generate the final results. In Section \ref{sec:index}, we took an 
overview of our storage and indexing structure for the RDF graphs. In this 
section, put the pruning algorithm and index structure together with our 
technique of \textit{multi-way-pipelined joins} to generate the final results 
as given in Algorithm-\ref{alg:qproc}. 

\begin{algorithm}[h]
\small{
\SetKwFunction{getgosn}{get-GoSN}
\SetKwFunction{getgoj}{get-graph-jvars}
\SetKwFunction{getjord}{get\_jvar\_order}
\SetKwFunction{bestmatch}{decide-best-match-reqd}
\SetKwFunction{init}{init}
\SetKwFunction{prune}{prune\_triples}
\SetKwFunction{bestmop}{best-match}
\SetKwFunction{finalres}{multi-way-join}
\SetKwFunction{sorttps}{spanning-tree-tps}
\SetKwFunction{tporder}{sort-tps-master-slave}
\SetKw{return}{return}
\SetKwInOut{Input}{input}
\SetKwInOut{Output}{output}
\Input{Original BGP-OPT query}
\Output{Final results}
\BlankLine
GoSN = \getgosn{Orig BGP-OPT query}\; \label{ln:getgosn}
\tcp{Based on cyclicity}
\textbf{bool} \textit{NB-reqd} = \bestmatch{GoSN, GoJ}\; \label{ln:bestmdec}
\prune{GoSN}\tcp*{Alg \ref{alg:prune}} \label{ln:callprune}
\BlankLine
tporder = \tporder()\; \label{ln:tporder}
stps = \sorttps()\; \label{ln:sorttps}
vmap = empty-map, size as num of vars in query\;
\tcp{Final result generation - Alg \ref{alg:multiway}} 
\textit{allres} = \finalres{vmap, stps, visited, NB-reqd}\; 
\label{ln:finalres}
\BlankLine
\eIf{NB-reqd}{ 
  \textit{finalres} = \bestmop{allres}\; \label{ln:bestmatch}
}{
  \textit{finalres} = \textit{allres};
}
\return{finalres}\;
}
\caption{Query processing}\label{alg:qproc}
\end{algorithm}

In Algorithm \ref{alg:qproc} for query processing, we first construct the GoSN 
(ln \ref{ln:getgosn}). Next, we decide if nullification and best-match are 
required. This decision is based on the cyclic properties the query, and we 
discuss them in Section \ref{sec:cycleprop}, and Lemmas \ref{lemma:acyclic} and 
\ref{lemma:cyclic}. We call \texttt{prune\_triples} (Algorithm-\ref{alg:prune}) 
to prune the unwanted triples. Pruning operation also on-the-fly loads the 
BitMat associated with each triple pattern containing only triples satisfying 
that triple pattern (we have not shown this operation explicitly in 
Algorithm-\ref{alg:prune}, but is explained next). We choose an appropriate 
BitMat for each triple pattern as follows. If the triple pattern in the query is 
of type (\textit{?var :fx1 :fx2}), i.e., with two fixed positions, we load only 
one row corresponding to \textit{:fx1} from the P-S BitMat for \textit{:fx2}. 
Similarly for a TP of type (\textit{:fx1 :fx2 ?var}), we load only one row 
corresponding to \textit{:fx2} from the P-O BitMat for \textit{:fx1}. E.g., for 
(\textit{?sitcom :location :NYC}) we load only one row corresponding to  
\textit{:location} from the P-S BitMat of \textit{:NYC}. If the triple pattern 
is of type (\textit{?var1 :fx1 ?var2}), we load either the S-O or O-S BitMat of 
\textit{:fx1}. If \textit{?var1} is a join variable and \textit{?var2} is not, 
we load the S-O BitMat and vice versa. If both, \textit{?var1} and 
\textit{?var2}, are join variables, we observe the order of semi-joins in 
$order_{bu}$ to check if the semi-join over the given triple pattern is over 
\textit{?var1} or \textit{?var2} first. If semi-join over \textit{?var1} 
comes before \textit{?var2}, we load the S-O BitMat and vice versa.

While loading the BitMats, we do active pruning using the triple patterns whose 
BitMats are already initialized. E.g., if we first load BitMat $BM_{T_1}$ 
containing triples matching (\textit{:Jerry :hasFriend ?friend}), then 
while loading $BM_{T_2}$, we use the bindings of \textit{?friend}
in $BM_{T_1}$ to actively prune the triples in $BM_{T_2}$ while loading it. 
Then while loading $BM_{T_3}$, we use the bindings of \textit{?sitcom} in 
$BM_{T_2}$ to actively prune the  triples in $BM_{T_3}$.
We check whether two triple patterns are joining with each other over an inner 
or left-outer join using GoSN with the \textit{master-slave} or \textit{peer} 
relationship, and then we decide whether to use other BitMat's variable 
bindings.

Note that using \texttt{prune\_triples}, we prune the triples in BitMats, but 
we need to actually ``join'' them to produce the final results. For that we use 
\texttt{multi-way-join} (ln \ref{ln:finalres} in Algorithm \ref{alg:qproc}).
This procedure is described separately in Section \ref{sec:finalres}.
After \texttt{multi-way-join}, we use \texttt{best-match} to remove any 
subsumed results if nullification was required as a part of 
\texttt{multi-way-join} (discussed in Section \ref{sec:cycleprop}). In 
\texttt{best-match}, we externally sort all the results generated by 
\texttt{multi-way-join}, and then remove the subsumed results with a single pass 
over them.

\subsection{Multi-way Pipelined Join} \label{sec:finalres}
\begin{algorithm}[t]
\small{
\SetKw{return}{return}
\SetKw{continue}{continue}
\SetKw{true}{true}
\SetKw{false}{false}
\SetKwFunction{finalres}{multi-way-join}
\SetKwFunction{output}{output}
\SetKwFunction{masterof}{master-of}
\SetKwFunction{peerof}{peer-of}
\SetKwFunction{nuli}{nullification}
\SetKwInOut{Input}{input}
\SetKwInOut{Output}{output}
\Input{vmap, stps, visited, nulreqd}
\Output{all the results of the query}
\BlankLine
\If{visited.size == stps.size}{\label{ln:outputres}
  \If{nulreqd}{
    \nuli(vmap)\; \label{ln:nullification}
  }
  \output(vmap)\tcp*{generate a single result} \label{ln:endoutputres}
\return\;
}
\eIf{visited is empty}{ \label{ln:firsttp}
  $T_1$ = first TP from stps\;
  visited.add($T_1$)\;
  \For{each triple $t \in BM_{T_1}$}{
    generate bindings for vars($T_1$) from $t$, store in \texttt{vmap}\;
    \finalres{vmap, stps, visited, nulreqd}\;
  } \label{ln:endfirsttp}
}{
$T_i$ = next triple pattern in \texttt{stps}\;

\BlankLine
    \textit{atleast-one-triple} = \false\;
     \For{$t \in BM_{T_i}$ with same bindings} {\label{ln:mapother}
      \textit{atleast-one-triple} = \true\;
      store vars($T_i$) bindings of $t$ in \texttt{vmap}\;
      visited.add($T_i$)\;
      \finalres{vmap, stps, visited, nulreqd}\;
      visited.remove($T_i$)\; \label{ln:endmapother}
    }
\BlankLine
  \If{(\textit{atleast-one-triple} == \false)}{
      \If{$T_i$ is an absolute master}{
      \return\; \label{ln:rollback}
      }
     \tcp{This means $T_i$ is a slave}
      set all vars($T_i$) to NULL in vmap\; \label{ln:nomap}
      visited.add($T_i$)\;
      \finalres{vmap, stps, visited, nulreqd}\;
      visited.remove($T_i$)\;\label{ln:endnomap}
  }
}
}
\caption{\texttt{multi-way-join}}\label{alg:multiway}
\end{algorithm}

Before calling \texttt{multi-way-join}, we first sort all the triple patterns 
in the query as follows. Considering the triple patterns in $SN_{abs}$, we sort 
them in the ascending order of the number of triples left in each triple 
pattern's BitMat. Then we sort the remaining supernodes in the descending order 
of master-slave hierarchy. That is, among two supernodes connected as $SN_1 
\rightarrow SN_2$, triple patterns in $SN_1$ are sorted before those in $SN_2$. 
Among the triple patterns in the same supernode (\textit{peers}), they are 
sorted in the ascending order of the number of triples left in their BitMats. 
Let us call this order \texttt{tporder} (ln \ref{ln:tporder} in Algorithm 
\ref{alg:qproc}). From \texttt{tporder}, we construct a conceptual 
\textit{spanning tree} as follows. The very first triple pattern is 
designated as the \textit{root} of the tree, and added to a new sort order of 
triple patterns, called \texttt{stps}. We recursively pick the next triple 
pattern from \texttt{tporder} such that it is connected to at least one triple 
pattern in \texttt{stps} (ln \ref{ln:sorttps} in Algorithm \ref{alg:qproc}).
This \texttt{stps} is used in \texttt{multi-way-join} to produce final results 
and decide the join order. In \texttt{multi-way-join} we also use at most 
$\sum_{T_i \in Q} vars(T_i)$ additional memory buffer, where $vars(T_i)$ are the 
variables in every triple pattern $T_i$ in the query $Q$. This is \texttt{vmap} 
in Alg \ref{alg:multiway}. Thus we use negligible additional memory in 
\texttt{multi-way-join}.

At the beginning, \texttt{multi-way-join} gets \texttt{stps}, an empty 
\texttt{vmap} for storing the variable bindings, an empty \texttt{visited} 
list, and a flag \texttt{nulreqd} indicating if nullification is required. We 
go over each triple in $BM_{T_1}$ of the first triple pattern in \texttt{stps}, 
generate bindings for the variables in $T_1$, and store them in \texttt{vmap}. 
We add $T_1$ to the \texttt{visited} list,
and call \texttt{multi-way-join} recursively for the rest of the triple 
patterns (ln \ref{ln:firsttp}--\ref{ln:endfirsttp}). In each recursive call, 
\texttt{multi-way-join} gets a partially populated \texttt{vmap} and a 
\texttt{visited} list that tells which triple pattern's variable bindings are 
already stored in \texttt{vmap}. \texttt{Stps} order is formed in a way that 
from second position onward each triple pattern in \texttt{stps} has at least 
one connection in \texttt{stps} order before it. Thus, for the next recursive 
call of \texttt{multi-way-join}, we are expected to find at least one variable 
binding in \texttt{vmap} for the next non-visited $T_i$ (ln \ref{ln:getbind}).
\texttt{Stps} order ensures that a master triple pattern's variable bindings are 
stored in \texttt{vmap} before its slaves. If there 
exists one or more triples $t$ in $BM_{T_i}$ consistent with the variable 
bindings in \texttt{vmap}, then for each such $t$ we generate bindings for all 
the variables in $T_i$, store them in \texttt{vmap}, and proceed with the 
recursive call to \texttt{multi-way-join} for the rest of the triple patterns 
(ln \ref{ln:mapother}--\ref{ln:endmapother}). Notice that, this way we 
\textit{pipeline} all the BitMats, and do not use any other intermediate storage 
like hash-tables.

If we do not find any triple $t$ in $BM_{T_i}$ consistent with the existing 
variable bindings in \texttt{vmap}, then -- (1) if $T_i$ is an absolute master, 
we \textit{rollback} from this point, because an absolute master triple pattern
cannot have NULL bindings (ln \ref{ln:rollback}), else (2) we map all the 
variables in $T_i$ to NULLs, and proceed with the recursive call to 
\texttt{multi-way-join} (ln \ref{ln:nomap}--\ref{ln:endnomap}).
When all the triple patterns in the query are in the \texttt{visited} list, we 
check if we require \texttt{nullification} to ensure consistent variable 
bindings in \texttt{vmap} across all the slave triple patterns, and output one 
result (ln \ref{ln:outputres}--\ref{ln:endoutputres}).
We continue this recursive procedure till triples in $BM_{T_1}$ are exhausted
(ln \ref{ln:firsttp}--\ref{ln:endfirsttp}).
Intuitively, \texttt{multi-way-join} is reminiscent of a relational join plan
with \textit{reordered left-outer-joins} -- that is, in \texttt{stps} we sort 
master triple patterns before their slaves, and masters generate variable 
bindings before slaves in \texttt{vmap}.

\section{Cycles in the Queries} \label{sec:cycleprop}

In this section we discuss the cyclic properties of the queries and how they 
affect the requirement of nullification and best-match operations after 
reordering the inner and left-outer-joins in a query, that makes an important 
part of reorderability of joins.

\subsection{Acyclic Queries} \label{sec:pruneacyclic}
In \cite{atresigmod15}, we proposed our pruning algorithm for OPTIONAL pattern 
queries by constructing a \textit{graph of join variables} (GoJ) as follows. 
Every join variable in the query is a node in GoJ, and two join variables have 
an undirected edge between them, if both the join variables co-occur in a 
triple pattern in the query. In \cite{atresigmod15}, we proposed that if the GoT 
of an OPT query is acyclic then the GoJ is acyclic too. However, there exists 
queries where this equivalence may not hold. An example of such a query is given 
below.

\vspace{2mm}
\begin{lgrind}
{\small
\L{\LB{\K{SELECT}_?\V{a}_?\V{b}_?\V{c}}}
\L{\LB{\K{WHERE}_\{}}
\L{\LB{}\Tab{4}{?\V{a}_:\V{p1}_?\V{b}_.}}
\L{\LB{}\Tab{4}{?\V{b}_:\V{p2}_?\V{c}_.}}
\L{\LB{}\Tab{4}{?\V{c}_:\V{p3}_?\V{a}_.}}
\L{\LB{}\Tab{4}{?\V{a}_?\V{b}_?\V{c}_.}}
\L{\LB{\}}}
}
\end{lgrind}
\vspace{2mm}

\begin{figure}
 \centering
 \includegraphics[scale=0.45]{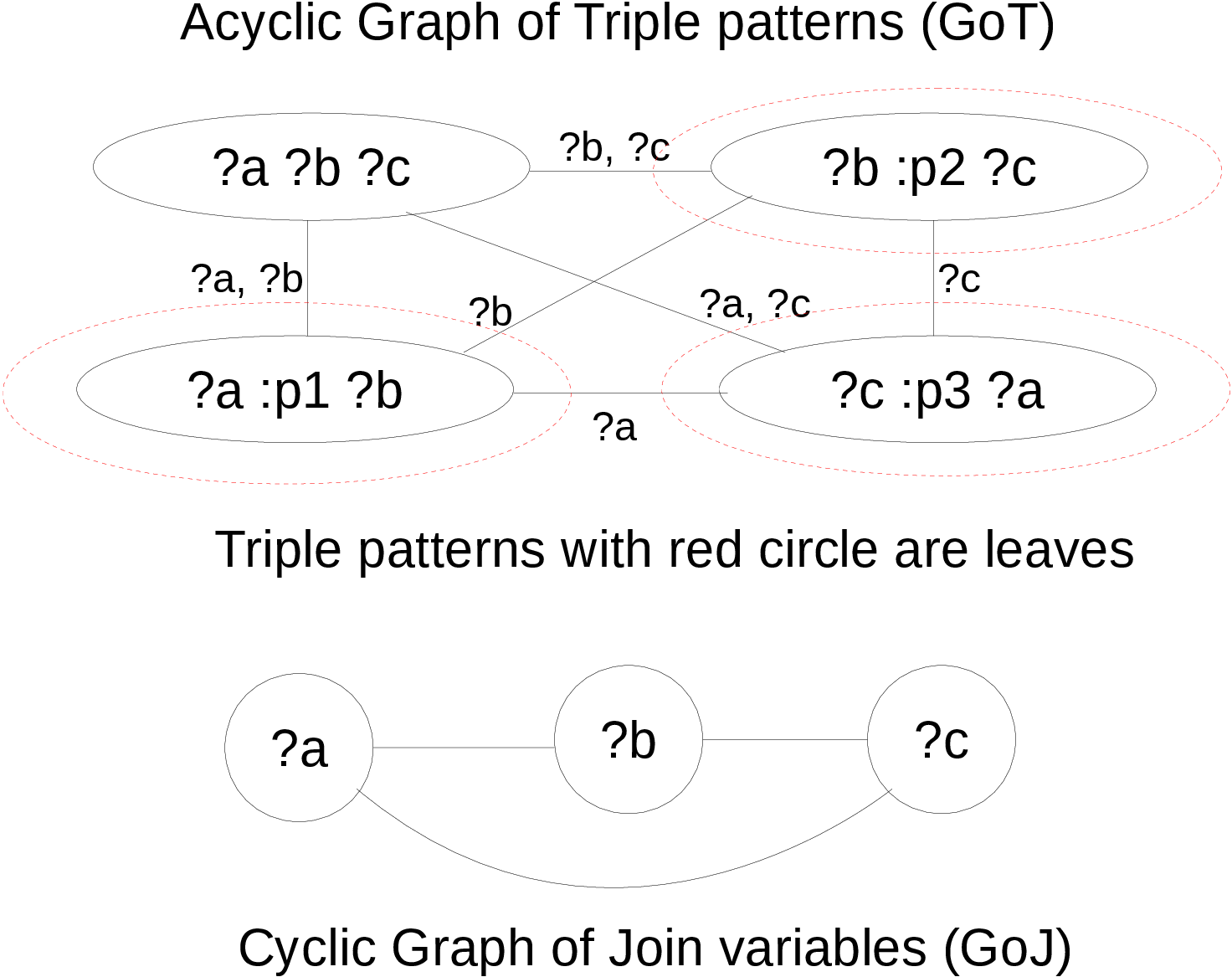}
 \caption{An acyclic query whose GoJ is cyclic} \label{fig:exception}
\end{figure}

\fxnote{draw a fig for this query}
Note that in this query the triple pattern \textit{(?a ?b ?c)} joins with every 
other triple pattern over two join variables. Per our definition of acyclicity 
given in Section \ref{sec:acyclic}, this query is 
indeed \textit{acyclic}. However, its GoJ is cyclic as shown in Figure 
\ref{fig:exception}. Also note that in this query there is a join on the 
\textit{object-predicate} position ($T_1$ and $T_3$) which is very uncommon in 
the RDF data. Most SPARQL queries have triple patterns joining over only one 
variable, and these joins are on subject-object, 
subject-subject, or object-object positions (since they naturally indicate 
edges incident on the nodes in the graph). Hence our original proposition and 
the technique of \textit{clustered-semi-joins} on GoJ  in \cite{atresigmod15}
still work correctly for the SPARQL queries where any two triple patterns always 
join only over one join variable (and all the queries used in our experiments 
in \cite{atresigmod15} satisfied this condition).
Nevertheless, our improved \texttt{prune\_triples} method (Algorithm 
\ref{alg:prune}) in this article works correctly for such corner case queries 
too.

\begin{lemma} \label{lemma:acyclic}
Nullification and best-match can be avoided for an OPTIONAL pattern query 
with an acyclic GoT.
\end{lemma}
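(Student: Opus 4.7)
The plan is to reduce the statement to the minimality-of-triples criterion (Lemma~3.1 of \cite{atresigmod15}, reiterated in Section~\ref{sec:nullmin}), and then argue that \texttt{prune\_triples} (Algorithm~\ref{alg:prune}) achieves the appropriate notion of minimality whenever the GoT is acyclic. Recall that Lemma~3.1 says: if every triple pattern is left with only triples that participate in at least one final answer, then reordering inner and left-outer joins inside \texttt{multi-way-join} cannot fabricate spurious non-NULL bindings, so neither nullification nor best-match is required. Hence it suffices to show that after \texttt{prune\_triples} the remaining triples in every $T_i$ are minimal for the BGP-OPT query.

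First I would handle the absolute-master block. By Properties~\ref{prop:got} and~\ref{prop:uniedges} the coalesced $SN_{abs}$ carries a connected GoT, and acyclicity of the whole GoT is inherited by this subgraph, so the classical full-reducer result of Bernstein and Ullman applies to the bottom-up followed by top-down semi-join passes that the algorithm performs inside $SN_{abs}$ (the very first supernode in \texttt{sn-order}). This leaves the triple patterns of $SN_{abs}$ minimal with respect to the inner-join portion that lives entirely inside $SN_{abs}$, i.e.\ every surviving triple participates in at least one tuple of the $\Join$ of the patterns in $SN_{abs}$.

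Next I would handle slaves inductively in \texttt{sn-order}. For a slave $SN_i$ whose (unique, by Property~\ref{prop:uniedges}) master is already minimal, the algorithm first inserts the constraint-transfer semi-joins $T_i \ltimes T_m$ for every edge from a master TP $T_m$ to a TP $T_i$ in $SN_i$. After this transfer, a surviving triple in $T_i$ is consistent with some binding of the master side, and because the GoT of $SN_i$ is acyclic (again inherited from the acyclic global GoT), the bottom-up/top-down semi-join passes inside $SN_i$ finish the reduction in the sense of a full reducer. The net effect is: every surviving triple $t \in T_i$ can be extended, using further surviving triples in other patterns of $SN_i$ and at least one master tuple, to a consistent joint binding; equivalently, $t$ appears in some answer of the whole query (possibly with other slaves realized as NULL). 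The ``possibly with other slaves NULL'' part is where I would use well-designedness: a join variable of a slave that is shared outside $SN_i$ must already occur in its master, so the consistency test done during constraint transfer is exactly the one that governs whether the slave can coexist with that master tuple.

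Finally I would check the behaviour of \texttt{multi-way-join} on minimal inputs: the master-slave sorting in \texttt{stps} guarantees that when \texttt{multi-way-join} reaches $T_i$ the partial \texttt{vmap} already contains a master binding that, by minimality, either extends to some triple of $T_i$ (in which case we take the non-NULL branch with the guarantee that this extension is part of a real answer) or does not (in which case the ``set all vars of $T_i$ to NULL'' branch is the genuinely correct answer, not a spurious one). Consequently no emitted tuple contains non-NULL slave bindings that the original join order would forbid, so nullification is a no-op; and no pair of emitted tuples differs only by ``genuine vs.\ spurious NULLs'' on the same master binding, so no tuple is subsumed by another and best-match is a no-op. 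The main obstacle I expect is the last step for slaves: carefully justifying that the constraint-transfer semi-joins together with the intra-$SN_i$ full reducer really do certify minimality in the BGP-OPT sense (not merely in the inner-join sense), and that well-designedness is precisely the hypothesis needed to rule out the awkward case of a slave join variable leaking outside $SN_i$ without being anchored in the master.
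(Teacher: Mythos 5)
Your proposal follows essentially the same route as the paper's own proof: reduce the claim to the minimality-of-triples criterion, obtain minimality inside the coalesced $SN_{abs}$ via the Bernstein--Ullman full-reducer result for acyclic inner joins, and extend it to slaves through the constraint-transfer semi-joins of Algorithm~\ref{alg:prune} before concluding that \texttt{multi-way-join} emits no inconsistent or subsumed results. Your treatment is in fact somewhat more explicit than the paper's (the induction over \texttt{sn-order}, the role of well-designedness, and the case analysis inside \texttt{multi-way-join}), but the decomposition and key ingredients coincide.
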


\begin{proof}
Nullification and best-match processes are described in Section 
\ref{sec:nullbm}. In that, it can be observed that nullification is required 
when some variable in a triple pattern $T_i$ is bound to a value that does not 
exist in another triple pattern $T_j$ that is either $T_i's$ master or peer.
When a BGP-OPT query is completely acyclic -- individual GoTs of each OPT-free 
BGP component as well as the entire GoT of the query across all the triple 
patterns is acyclic -- \texttt{prune\_triples} (Algorithm \ref{alg:prune}) 
ensures that each triple pattern is left with minimal number of triples -- 
there does not exist any binding for a variable in $T_i$ that does not exist in 
its master or peer triple pattern $T_j$.

This minimality is ensured as follows. When we do pruning of triples in 
$SN_{abs}$, the triples associated with triple patterns in $SN_{abs}$ are 
reduced to minimal. This follows directly from the proofs of Bernstein et al. 
and Ullman \cite{semij1,semij2,ullman}. When we do pruning of triples in slave 
supernodes, the bindings in its acyclic master supernode are already 
minimalized, and we propagate those on the present slave supernode's 
variable bindings (lines \ref{ln:ifslave}--\ref{ln:endifslave} in 
Algorithm-\ref{alg:prune}). Thus after completion of \texttt{prune\\\_triples}, 
each triple pattern in the query has minimal triples associated with its triple 
patterns. Thus nullification and best-match can be avoided. \qed
\end{proof}

\subsection{Cyclic Queries} \label{sec:cyclic}
For OPT-free BGPs, i.e., pure inner-joins, with a cyclic GoT, minimality of 
triples cannot be guaranteed using the procedure of bottom-up followed by 
top-down pass of semi-joins on the spanning tree over GoT as described in in 
Section \ref{sec:pruneacyclic} \cite{semij1,semij2,ullman}. This result carries 
over immediately to the queries with an intermix of BGP and OPT patterns too 
whose GoT is cyclic. Thus, we simply prune the triples by following a 
\textit{greedy} order of semi-joins, while adhering to the master-slave 
hierarchy among the triple patterns in the query. The greedy order of 
semi-joins is determined by the relative selectivity of the triple patterns, 
and the master-slave hierarchy between them (ln \ref{ln:greedyord} in Algorithm 
\ref{alg:prune}). Since minimality of triples in each TP is not guaranteed, we 
need to use the \textit{nullification} and \textit{best-match} operations in a 
reordered query to ensure consistent variable bindings, and to remove subsumed 
results.

This observation in general holds for all cyclic BGP-OPT queries, but we 
identify a \textit{subclass} of cyclic BGP-OPT queries that can \textit{avoid} 
nullification and best-match -- in such queries the following conditions hold:
\begin{enumerate}
 \item A subgraph of GoT representing only the triple patterns in any slave 
supernode is acyclic.
  \item There is only \textit{equivalence class} of edges connecting triple 
patterns in the master-slave hierarchy between two supernodes. That is, if $SN_a 
\rightarrow SN_b$, and we put all GoT edges of type $\langle T_m, T_s\rangle, 
T_m \in SN_a, T_s \in SN_b$ in equivalence classes, then there is only one 
equivalence class. Generalizing, this property holds for each and every 
pair of supernodes $SN_x \rightarrow SN_y$ in master-slave hierarchy that have 
an directed edge among them in GoSN. Recall our definition of equivalence 
classes of edges given in Definition \ref{def:eqvclass}.
\end{enumerate}

For such queries, we do greedy way of pruning using semi-joins over the triple 
patterns in $SN_{abs}$, and after that for each slave supernode, 
we follow the same procedure of bottom-up and top-down pruning as described in 
Algorithm \ref{alg:prune} (ref. ln \ref{ln:partcycle}--\ref{ln:partcycleend} 
and ln \ref{ln:eachsn}--\ref{ln:eachsnend}).

\begin{figure}
 \centering
 \includegraphics[scale=0.4]{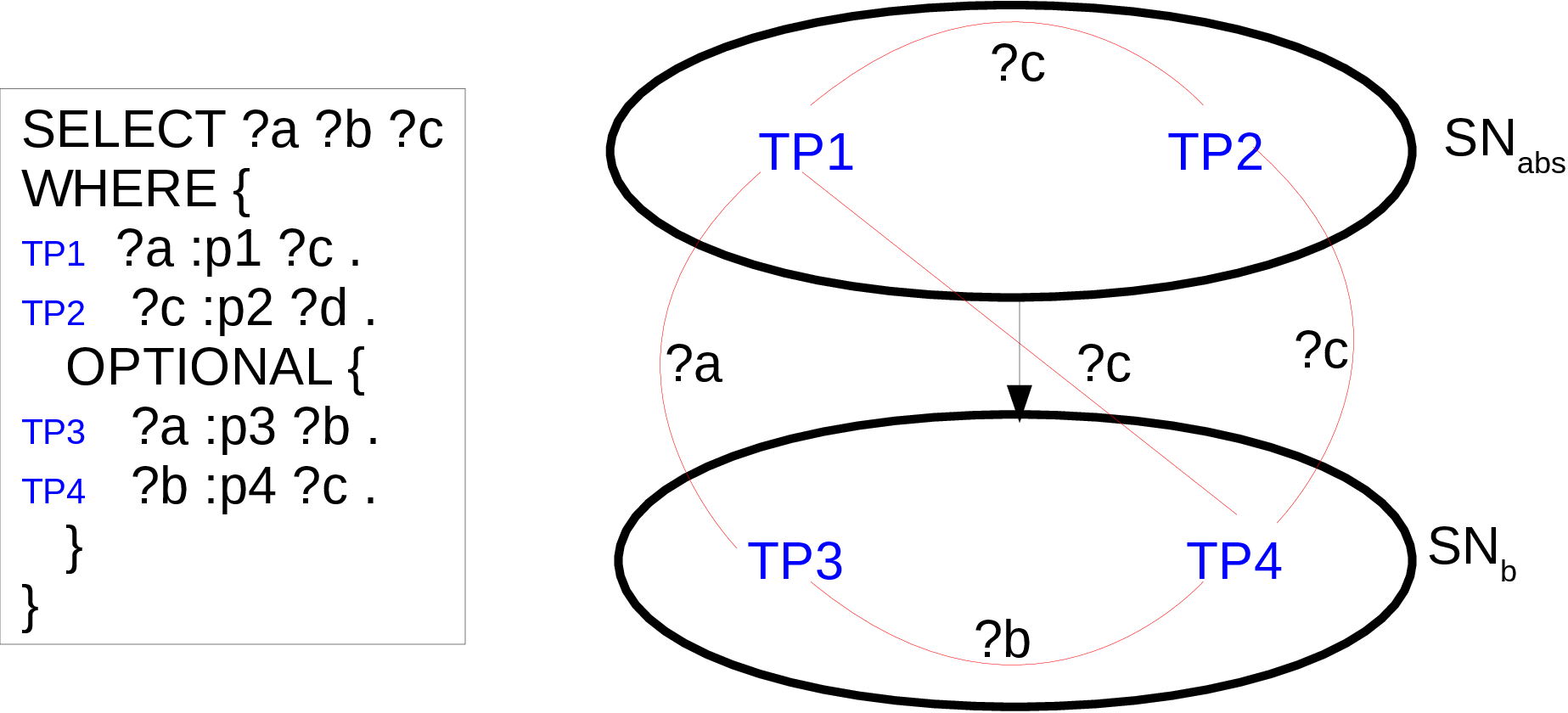}
 \caption{Example of a query where nullification and best-match cannot be 
avoided although GoTs of individual supernodes are acyclic} 
\label{fig:exception2}
\end{figure}

In Figure \ref{fig:exception2}, we have given an example of a query where there 
are two equivalence classes of GoT edges running between $SN_{abs}$ and the 
slave supernode $SN_b$, and hence it cannot avoid nullification and best-match. 
Notice carefully that in this query although the individual GoTs of $SN_{abs}$ 
and the slave supernode are acyclic, the GoT over all the triple patterns is 
cyclic.

\begin{lemma} \label{lemma:cyclic}
For a BGP-OPT query, if (1) there is only one equivalence class of edges 
across any pair of master-slave supernodes, and (2) each subgraph of GoT 
representing the triple patterns in each slave supernode is acyclic, then 
nullification and best-match can be avoided if the triples associated with the 
triple patterns are pruned with {\normalfont\texttt{prune\_triples}}
(Algorithm \ref{alg:prune}), and results are produced using 
{\normalfont\texttt{multi-way-join}} (Algorithm \ref{alg:multiway}). This holds 
even if the GoT of the triple patterns in $SN_{abs}$ is cyclic.
\end{lemma}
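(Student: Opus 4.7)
The plan is to show that under the two stated conditions, after \texttt{prune\_triples} runs, no inconsistent variable binding ever survives into \texttt{multi-way-join}, so nullification (and consequently best-match) becomes unnecessary. I would start by recalling the characterization of when nullification is actually required, as established in the proof of Lemma \ref{lemma:acyclic}: a spurious binding arises precisely when a slave or peer triple pattern $T_s$ retains a variable value that is not compatible with any binding of its master (or peer) $T_m$. The goal is to rule out this situation in every slave supernode, which in turn reduces the argument to controlling the effect of the cycles that may remain inside $SN_{abs}$.

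Next I would dispose of $SN_{abs}$ directly. Even though its GoT may be cyclic and the greedy semi-join phase on lines \ref{ln:partcycle}--\ref{ln:partcycleend} of Algorithm \ref{alg:prune} does not leave the absolute-master triple patterns minimal, \texttt{multi-way-join} traverses the spanning tree over $SN_{abs}$ first and performs honest inner joins among its triple patterns. Because absolute masters cannot be nullified (line \ref{ln:rollback} of Algorithm \ref{alg:multiway}), any combination that fails a back-edge of the cycle is simply discarded and never emitted as a partial tuple. Hence the only $SN_{abs}$-bindings that are ever exposed to the slave-processing phase are genuine inner-join solutions of the $SN_{abs}$ subquery---exactly as if $SN_{abs}$ had itself been minimized.

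For each slave supernode $SN_b$ sitting under a supernode $SN_a$ that has already been handled, I would argue minimality in two stages. Stage one uses condition~(1): since only one equivalence class of GoT edges crosses from $SN_a$ to $SN_b$, the master-to-slave constraint transfer performed on line \ref{ln:transfer} of Algorithm \ref{alg:prune} is captured by a single semi-join along that class, and by the definition of equivalence classes in Section \ref{sec:acyclic} this single semi-join subsumes every join restriction the master can impose on the slave. Stage two uses condition~(2): because the GoT restricted to $SN_b$ is acyclic, the bottom-up and top-down passes inside $SN_b$ realize exactly the full-reducer semi-join sequence of Bernstein and Ullman \cite{semij1,semij2,ullman}, so every triple pattern of $SN_b$ ends up minimal relative to the constrained master bindings and its peers. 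Inductively, after all slave supernodes are processed in the order produced by \texttt{order-supernodes}, each slave triple pattern retains precisely those bindings that extend some real master combination and that are mutually consistent with every peer. Plugging this into \texttt{multi-way-join}: for each master combination it iterates over, either a peer-consistent slave extension exists (and is emitted) or all variables of the slave supernode are uniformly set to NULL through line~\ref{ln:nomap}, which is the semantically correct left-outer-join outcome. No tuple inconsistent with the original nesting is ever produced, so nullification---and therefore best-match---can be skipped.

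The main obstacle I expect is the precise justification of stage one: making rigorous the claim that one equivalence class of cross-supernode edges really is enforced by the single transferred semi-join, including the subcase where that class involves several join variables or touches more than one triple pattern on the slave side. I anticipate handling this by a small compositional lemma about semi-joins along a common equivalence class, and by contrasting it explicitly with the two-equivalence-class configuration in Figure \ref{fig:exception2}, where an analogous propagation demonstrably fails and hence nullification is genuinely required.
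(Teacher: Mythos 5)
Your proposal is correct and follows essentially the same route as the paper's own proof: cycles in $SN_{abs}$ are neutralized by the rollback behaviour of \texttt{multi-way-join} (only join-consistent absolute-master bindings ever reach the slave phase), the single equivalence class of cross edges gives one master--slave pair of triple patterns covering all shared variables so a single check (or the transferred semi-join of line \ref{ln:transfer}) decides whether a slave extension exists, and acyclicity plus the full-reducer passes inside each slave supernode guarantee that any surviving slave triple extends to a complete, internally consistent slave match, otherwise the whole slave (and its slaves) is set to NULL. One small overstatement --- slave triple patterns are pruned against the \emph{frozen, possibly non-minimal} $SN_{abs}$ triples, so they may retain bindings extending spurious master tuples rather than ``precisely those'' extending real master combinations --- but this does not harm the argument, since such bindings are never reached at join time, exactly as in the paper.
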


\begin{proof}
In Lemma \ref{lemma:acyclic}, we saw that if each OPT-free BGP component of a 
query has triple patterns with minimal triples associated with it, 
nullification and best-match can be avoided. When $SN_{abs}$ has a cyclic GoT, 
it cannot be guaranteed that the triple patterns in it have minimal triples 
associated with them after \texttt{prune\_triples} method. Note that we prune 
the triples in $SN_{abs}$ before any other slave supernodes in the query, and 
then do not revisit it again in \texttt{prune\_triples} (lines  
\ref{ln:orderslaves} and \ref{ln:eachsn}). Thus when a slave supernode fetches 
the variable bindings from its master (lines 
\ref{ln:ifslave}-\ref{ln:endifslave} in Algorithm \ref{alg:prune}), the master's 
variable bindings have already been frozen (irrespective of whether minimal or 
not). During \texttt{multi-\\way-join}, we again visit triple patterns in their 
master-slave hierarchy, i.e., we start with the triple patterns in $SN_{abs}$, 
create variable bindings for all the triple patterns in it, and then move on to 
other slave supernodes, in their respective master-slave hierarchy.

If the GoT of just triple patterns in $SN_{abs}$ has cycles, then we 
\textit{backtrack} while generating variable bindings for its triple patterns 
whenever there is a mismatch (ln \ref{ln:rollback} in Algorithm 
\ref{alg:multiway}). Thus when we start processing the slave triple patterns 
from an \textit{acyclic} slave supernode, all the variable bindings of 
$SN_{abs}$ have been decided, and are consistent across all the triple patterns 
of $SN_{abs}$. Having only one equivalence class of edges between $SN_{abs}$ 
and its slave, say $SN_{i}$, means that there is always \textit{one} triple 
pattern, say $T_m$, in $SN_{abs}$ and respectively $T_s$ in $SN_i$, which have 
one or more shared variables between them that cover all the shared variables 
between $SN_{abs}$ and $SN_{i}$. Then in \texttt{multi-way-join}, whenever we 
start processing the triple patterns from $SN_i$ for generating variable 
bindings, we first ensure that $T_s$ in $SN_i$ has the exact same bindings for 
\textit{all} the variables shared between it and $T_m$. It may happen that such 
a (composite) binding does not exist in $T_s$, then we immediately set all the 
variables in $SN_i$ to NULL bindings, and do the same for all of slaves of 
$SN_i$ too. Thus at the end of one iteration of \texttt{multi-way-join}, we do 
not have any inconsistent variable bindings in \texttt{vmap} that necessitate 
nullification, followed by best-match. We assume that the BGP-OPT queries are 
\textit{well-designed}, which ensures that no master-slave supernode that is 
not connected with a directed edge share variables among them, that are not 
shared between any intermediate master supernode.

Note that this is possible only because each slave supernode has an 
\textit{acyclic} GoT, and the triple patterns in it have minimal triples, that 
are consistent across the same slave supernode. Since there is only one 
equivalence class edge between any master-slave pair of supernodes, a unique 
triple pattern in the slave can determine if the respective bindings exist in 
the slave. \qed
\end{proof}

\section{UNIONs, FILTERs, DISTINCT} \label{sec:discuss}

Basic Graph Patterns (BGPs) \textit{connect} the triple patterns through 
inner-joins, and the OPTIONAL patterns connect two BGPs through 
left-outer-joins. BGP-OPT patterns make the basic building blocks of the 
SPARQL query language, as we will see in this section later through the query 
rewriting rules in UNIONs and FILTERs. SPARQL, like SQL, has other 
\textit{non-join} constructs 
too such as UNIONs, FILTERs, DISTINCT. SPARQL also has additional constructs 
like ORDER BY, FROM, FROM NAMED, REDUCED, OFFSET, LIMIT, ASK, CONSTRUCT etc. 
SPARQL grammar is recursive, and it can be pictorially represented as shown in 
Figure \ref{fig:sparqlgram}\footnote{For the clarity and conciseness, in 
Figure \ref{fig:sparqlgram} we have shown the core recursive performance 
intensive components of SPARQL. SPARQL 1.1 grammar has other syntactic 
components that are not shown in the figure.}. \fxnote{Show fig from thesis.}
In our work we have mainly focused on the SPARQL components that constitute the 
\textit{recursive} part of the language, and thus make the performance 
intensive components of the query evaluation strategies.
Having done the BGP-OPT component analysis in the previous sections, further 
in this section, we analyze UNIONs, FILTERs, and DISTINCT clauses, and their 
interaction with the BGP-OPT component.

\begin{figure}
 \centering
 \includegraphics[scale=0.3]{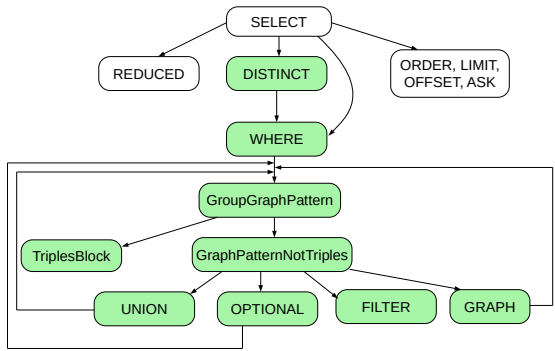}
 \caption{SPARQL grammar pictorial representation} \label{fig:sparqlgram}
\end{figure}

\subsection{UNIONs} \label{sec:union}
\fxnote{Give an example of UNION.}
The SPARQL grammar does \textit{not} enforce the two patterns being UNIONed to 
be \textit{union compatible} (SQL does)\footnote{Two patterns are 
union compatible, if they have the same \textit{arity} and same 
\textit{attributes}, e.g., for SPARQL it means that the two patterns would have 
the same \textit{size} and the same \textit{variables} in them.}. However, for 
our discussion, we assume \textit{well-designed UNIONs} -- for every subpattern 
$P' = (P_1 \cup P_2)$ in a query, if a variable $?j$ appears outside $P'$, then 
it \textit{must} appear both in $P_1$ and $P_2$. This assumption avoids having 
to deal with a query with ``dangerous'' variables, Cartesian products, and 
\textit{disconnected} GoT  \cite{polleres}.

For any BGP-OPT-UNION query, first we identify UNION-free BGP-OPT subcomponents 
of the query. E.g., in a query of the form $P_1 \cup ((P_2 \Join P_3) \cup 
(P_4 \leftouterjoin P_5))$, where $P_{1...5}$ are all OPT-free BGPs, and 
following are the UNION-free components -- (a) $P_1$, (b) 
$(P_2 \Join P_3)$, (c) $(P_4 \leftouterjoin P_5))$. For each such UNION-free 
BGP-OPT sub-component of the query, we prune the triples associated with each 
triple pattern in that component using our \texttt{prune\_triples} procedure 
(Algorithm \ref{alg:prune}). Note that for this pruning procedure, we 
\textit{ignore} all the other sub-components of the query and the respective 
triple patterns in those sub-components. We only consider the given 
sub-component as an independent BGP-OPT query.

Next we apply following three conversion rules on the BGP-OPT-UNION query and 
convert the query in the \textit{UNION Normal Form} (UNF) \cite{perez2}. Note 
that for the 
notations given below a pattern $P_i$ need not be a BGP, it can be a pattern  
with other sub-patterns nested inside it, and the $\equiv$ symbol indicates 
that the results generated by the queries on the either side of $\equiv$ are 
always the same for a \textit{well-designed union compatible} query.
\begin{enumerate}
 \item $(P_1 \cup P_2) \leftouterjoin P_3 
\equiv (P_1 \leftouterjoin P_3) \cup (P_2 \leftouterjoin P_3)$.
 \item $(P_1 \cup P_2) \Join P_3 \equiv (P_1 \Join P_3) \cup (P_2 
\Join P_3)$
 \item $P_1 \leftouterjoin (P_2 \cup P_3)$ is rewritten as 
$(P_1 \leftouterjoin P_2) \cup (P_1 \leftouterjoin P_3)$. However, $P_1 
\leftouterjoin (P_2 \cup P_3) \not\equiv (P_1 \leftouterjoin P_2) \cup (P_1 
\leftouterjoin P_3)$ for conventional union\footnote{However, if we do a 
\textit{minimum-union} the results will be the same as elaborated later.}.
We will elaborate on this rewrite after presenting our query evaluation 
technique below.
\end{enumerate}

We convert any BGP-OPT-UNION query in the UNF by applying the above three 
conversion rules, such that the resulting query looks like $P_i \cup P_{i+1}... 
\cup P_{i+k}$ where each $P_{i+j}, 0 \leqslant j \leqslant k$ is a UNION-free 
BGP-OPT pattern. Note that we have deliberately used suffixes $(i+j), 0 
\leqslant j \leqslant k$ for this representation, to avoid confusion with the 
patterns in a BGP-OPT-UNION query \textit{before} bringing it in the UNF. Now 
each of these 
subqueries can be treated independently as BGP-OPT 
queries for the purpose of entire query evaluation. For each such subquery, we 
run \texttt{multi-way-join} as given in Algorithm \ref{alg:multiway}. 
\fxnote{Modify algo to take care of these cond.} Note that 
\texttt{multi-way-join} needs to know if it has to do \texttt{nullification} on 
each generated result. Just like we do for any UNION-free BGP-OPT query, for 
each $P_{i+j}$ subquery in the UNF, nullification is required if it 
\textit{violates any} of the following conditions:
\begin{enumerate}[label=\alph*.]
  \item Considering the GoSN of $P_{i+j}$, GoT of triple patterns enclosed in 
each slave of GoSN is acyclic.
 \item For every pair of master-slave supernodes such as $SN_a \rightarrow 
SN_b$, we consider \textit{equivalence classes} of GoT edges $\langle T_m, 
T_s\rangle, T_m \in SN_a, T_s \in SN_b$ (recall our definition of equivalent 
classes of GoT edges from Section \ref{sec:prune}). Then for every pair of 
master-slave supernodes, there is only one equivalence class of GoT edges 
between master-slave triple patterns.
 \end{enumerate}

Once the final results are generated for each $P_{i+j}$ subquery in the UNF 
using the \texttt{multi-way-join} and nullification (wherever required), we do a 
``union all'' of results of all the subqueries -- all the results are compiled 
together along with any duplicates as well. Next, we decide when we need to use 
the best-match operation over the unioned results, and after that we elaborate 
\textit{why} we do this. Best-match is required if --

\begin{itemize}[label=$\bullet$]
 \item nullification operation was used in at least one subquery in the UNF.
 \item while bringing the original query in UNF, at least once pattern $P_1 
\leftouterjoin (P_2 \cup P_3)$ was rewritten as $(P_1 \leftouterjoin P_2) \cup 
(P_1 \leftouterjoin P_3)$.
\end{itemize}

For the first condition -- nullification used in any subqueries -- it is 
straightforward to see that best-match is required to remove the subsumed 
results (ref. Section \ref{sec:nullbm}). For the second condition, recall that 
for the third union expansion rule, we noted that $P_1 \leftouterjoin (P_2 \cup 
P_3) \not\equiv (P_1 \Join P_3) \cup (P_2 \Join P_3)$, and now we elaborate on 
this. Per SPARQL grammar, the UNION operation does a ``union all'' of the 
results produced by the unioned patterns, without removing the duplicates. Also 
when the unioned elements have $>1$ arity, the \textit{set union} operation does 
not remove \textit{subsumed} results (ref. Section \ref{sec:nullbm} for the 
definition of \textit{subsumed} results.).

\begin{definition}
 A union operation that removes subsumed results is called \textbf{minimum 
union}.
\end{definition}

Let two sets of bindings for variable pairs $(?a, ?b)$ be unioned as  
\textit{\{(:p1, NULL)\}} $\cup$ \textit{\{(:p1, :p2)\}}. The result of this 
union is \textit{\{(:p1, NULL), (:p1, :p2)\}}. However, if the sets of 
variable bindings being unioned are \{(NULL, NULL)\} $\cup$ 
\textit{\{(:p1, :p2)\}}. Then the result is \textit{\{(:p1, :p2)\}}. This means 
that union of \textit{NULL} co-existing with another non-null value stays in 
the final results despite being subsumed by another result, but not by itself.
Also in SPARQL, unlike most SQL systems, the 
joins are \textit{null-compatible}, i.e., the join of variable bindings of 
$(?a, ?b)$, $\{(:p1, NULL)\} \Join (:p1, :p2)$ is $\{(:p1, :p2)\}$, and 
$\{(NULL, NULL)\} \Join (:p1, :p2) = \{(:p1, :p2)\}$. Because of the above two 
important observations, if we rewrite a pattern $P_1 \leftouterjoin (P_2 \cup 
P_3)$ as $(P_1 \leftouterjoin P_2) \cup (P_1 \leftouterjoin P_3)$, it may 
generate different results than $P_1 \leftouterjoin (P_2 \cup P_3)$, if either 
$P_2$ or $P_3$ or both generate all null results.

Thus, for $P_1 \leftouterjoin (P_2 \cup P_3)$ pattern rewritten as $(P_1 
\leftouterjoin P_2) \cup (P_1 \leftouterjoin P_3)$, the final results 
may not match. However the original query and its UNF rewrite are 
semantically the same, and a query execution method that employs minimum-union 
instead of union-all or set-union generates all the \textit{non-duplicate} and 
\textit{non-subsumed} results correctly.

Thus we conclude this discussion that \textit{minimum union} instead of 
set-union or union-all, allows the third rewrite rule for converting a 
BGP-OPT-UNION query into UNF while keeping the patterns before and after the 
rewrite equivalent with respect to the generated results. As a result, it 
allows any BGP-OPT-UNION query to be brought in the UNF allowing us more 
options for query optimization strategies such as the ones described as a part 
of this article.

In our previous work \cite{atresigmod15}, we had proposed to convert a
BGP-OPT-UNION query into the UNF and process each of the subqueries in the UNF 
independently. This involves duplication of efforts to prune the triples 
associated with a triple pattern that appears in multiple UNF subqueries (due 
to rewrite rules). Through the method presented in this section, we avoid this 
by treating each union-free BGP-OPT component of the query independently, 
pruning it \textit{before} bringing the query in UNF, and use normal sequence 
of \texttt{multi-way-join}, followed by \textit{best-match} wherever required 
by the structure of the query.

\begin{lemma}
Nullification and best-match can be avoided if each subquery $P_{i+j}, 1 
\leqslant j \leqslant k$ in the Union Normal Form of a BGP-OPT-UNION query 
satisfies the following 
two conditions:
\begin{enumerate}[label=\alph*.]
  \item Considering only the GoSN of $P_{i+j}$, GoT of triple patterns 
enclosed in each slave supernode of this GoSN is acyclic.
 \item For every pair of master-slave supernodes such as $SN_a \rightarrow 
SN_b$ in the GoSN of $P_{i+j}$, there is only one equivalent class GoT edges.
 \end{enumerate}
\end{lemma}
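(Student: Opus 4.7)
The plan is to reduce the lemma to the already established Lemma \ref{lemma:cyclic} applied to each subquery in the UNF, and then argue separately that the top-level union operator does not reintroduce any need for nullification or best-match. Since the query in UNF has the shape $P_i \cup P_{i+1} \cup \dots \cup P_{i+k}$ where every $P_{i+j}$ is a UNION-free BGP-OPT pattern, conditions (a) and (b) in the statement are exactly the hypotheses required by Lemma \ref{lemma:cyclic}. Hence the first step is to invoke Lemma \ref{lemma:cyclic} on every $P_{i+j}$ individually to conclude that, after running \texttt{prune\_triples} (Algorithm \ref{alg:prune}) and producing results via \texttt{multi-way-join} (Algorithm \ref{alg:multiway}), each $P_{i+j}$ yields a result set whose variable bindings are already consistent across all slave triple patterns, so no nullification is invoked inside any branch of the union.

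The second step is to handle the ``union all'' step that combines these subquery results. I would appeal to the characterization from Section \ref{sec:union}: best-match after the union is necessary only if (i) nullification was invoked in at least one branch, or (ii) the rewrite $P_1 \leftouterjoin (P_2 \cup P_3) \equiv (P_1 \leftouterjoin P_2) \cup (P_1 \leftouterjoin P_3)$ was used. The first possibility is ruled out by Step 1. For the second, I would argue that whenever this rewrite is applied, the subsumed pairs that it can introduce are precisely those in which one branch contributes the ``all NULL'' binding for the slave side while the other contributes a non-null binding consistent with the same master tuple; the minimum-union semantics adopted for the top-level $\cup$ already eliminates such subsumed results by construction, so no explicit best-match pass is needed beyond what minimum-union already provides. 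In other words, under the hypotheses, every result produced by \texttt{multi-way-join} in any branch is already non-subsumed relative to results in the same branch (by Lemma \ref{lemma:cyclic}), and the only cross-branch subsumption that can arise is exactly the kind minimum-union is defined to remove.

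The main obstacle I expect is the interaction between the third rewrite rule and the lemma's hypotheses: one has to be careful that conditions (a) and (b) really do carry through the rewrite, since the same triple patterns appear in multiple UNF branches and the master--slave structure in each branch may be different from that of the original query. I would address this by observing that the rewrite rule only duplicates the master pattern $P_1$ into each branch and does not alter the GoSN structure within any resulting UNION-free subcomponent; thus if (a) and (b) hold branch-by-branch in the UNF, they describe a well-defined structural property that is preserved under the invocation of \texttt{prune\_triples} and \texttt{multi-way-join}. A secondary subtlety is the null-compatible join semantics of SPARQL noted in Section \ref{sec:union}: I would verify explicitly that when \texttt{multi-way-join} assigns NULL to an entire slave supernode because no matching composite binding exists (the argument used in the proof of Lemma \ref{lemma:cyclic}), the subsequent minimum-union still yields exactly the set of results prescribed by the original BGP-OPT-UNION query's semantics, completing the proof.
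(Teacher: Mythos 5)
Your first step is exactly the paper's proof: the paper disposes of this lemma in two lines by noting that conditions (a) and (b) are precisely the hypotheses of Lemma \ref{lemma:cyclic} (with Lemma \ref{lemma:acyclic} covering the fully acyclic case), and that each $P_{i+j}$ in the UNF can be treated as an independent BGP-OPT query for result generation; so on the core argument you and the paper coincide. Where you diverge is your second step, which the paper does not attempt at all: you try to additionally show that the top-level union does not reintroduce a best-match pass even when rewrite rule 3 ($P_1 \leftouterjoin (P_2 \cup P_3)$ expanded into two branches) was used. Be careful here: in the paper's own terminology the best-match operator \emph{is} minimum union (Section \ref{sec:nullbm}), and Section \ref{sec:union} explicitly lists the use of rewrite rule 3 as an independent trigger for best-match, regardless of conditions (a) and (b). So your claim that the cross-branch subsumption introduced by that rewrite is ``already eliminated by minimum-union, so no explicit best-match pass is needed'' is circular --- it relabels the best-match pass as part of the union rather than avoiding it. The paper sidesteps this entirely by scoping the lemma to the per-branch, structure-induced need for nullification and best-match (each subquery treated independently), leaving the rewrite-rule-3 trigger as an orthogonal condition handled separately in Section \ref{sec:union}. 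Your proposal is correct insofar as it reduces to Lemma \ref{lemma:cyclic} branch by branch; the extra union-level argument should either be dropped (matching the paper's scope) or restated as ``when rewrite rule 3 is used, the subsumption removal is performed by the minimum-union itself,'' rather than as a claim that best-match is avoided in that case.
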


\begin{proof}
This lemma is obtained by combining the results of Lemma \ref{lemma:acyclic} 
and \ref{lemma:cyclic}, and hence the proof follows from the proof of those 
lemmas. This is because each $P_{i+j}$ subquery in UNF can be treated as a 
BGP-OPT query independently for the purpose of generating the final results of 
the query. \qed
\end{proof}

\subsection{FILTERs} \label{sec:filter}
SPARQL FILTER construct can have a complex Boolean expression that is supposed 
to be evaluated for every answer/result generated for the pattern over which 
it is applied. Below we have given an example of a BGP-OPT-FILTER query.

\vspace{2mm}
\begin{lgrind}
\Head{}
{\small
\L{\LB{\K{SELECT}_?\V{friend}_?\V{sitcom}_?\V{dir}}}
\L{\LB{}\Tab{4}{\K{WHERE}_\{}}
\L{\LB{}\Tab{6}{:\V{Jerry}_:\V{hasFriend}_?\V{friend}_.}}
\L{\LB{}\Tab{6}{?\V{friend}_:\V{age}_?\V{age}_.}}
\L{\LB{}\Tab{6}{?\V{friend}_:\V{actedIn}_?\V{sitcom}_.}}
\L{\LB{}\Tab{6}{\K{OPTIONAL}_\{}}
\L{\LB{}\Tab{8}{?\V{sitcom}_:\V{hasDirector}_?\V{dir}_.}}
\L{\LB{}\Tab{8}{?\V{sitcom}_:\V{location}_:\V{NYC}_.}}
\L{\LB{}\Tab{6}{\}}}
\L{\LB{}\Tab{6}{\K{FILTER}(?\V{age} \< \V{60} \&\& ?\V{dir} != \V{:Jerry})}}
\L{\LB{}\Tab{4}{\}}}
}
\end{lgrind}
\vspace{2mm}

In essence, this query is asking for all the friends of \textit{:Jerry} who 
have acted in a \textit{?sitcom}, and optionally information of the 
\textit{?dir} of the \textit{?sitcom} too, if it was located in \textit{:NYC}. 
Notice that the FILTER condition further emphasizes that the \textit{?friend} 
must be younger than ``60'' years of age, and the \textit{?sitcom}'s director 
(\textit{?dir}) \textit{cannot} be \textit{:Jerry} himself.

In general, FILTER expression of a SPARQL query can be notationally 
represented as $P_x \mathcal{F}(R)$, where $P_x$ is a SPARQL query pattern 
-- BGP, OPTIONAL, UNION or any combination of these -- and $R$ is a Boolean 
valued filter condition to be applied on $P_x$. For the scope of our discussion, 
we assume \textit{safe filters} \cite{perez2,polleres}, i.e., for $P_x 
\mathcal{F}(R)$, all the variables in $R$ appear in $P_x$, $vars(R) \subseteq 
vars(P_x)$. This is in line with our previous discussion of well-designed 
OPTIONAL and UNION patterns. \textit{Unsafe} (non-well-designed) filters can 
alter the semantics of the OPTIONAL patterns (ref \cite{perez2} for more 
detailed discussion). The FILTER pattern evaluation techniques presented here 
can work with any combination of BGP, OPTIONAL, and FILTER, because BGP-OPT 
remain as the building blocks of a SPARQL query, and our query evaluation 
techniques are built for these basic blocks.

\fxnote{Give an example of filter.}
For a SPARQL query with FILTERs, first we \textit{push-in} the filter 
conditions as much as possible using the following rewrite rule on each 
UNION-free subcomponent of the query \textit{without} bringing the query in the 
UNF (the first three rewrite rules are described in Section \ref{sec:union}).

\begin{enumerate}
\setcounter{enumi}{3}
 \item $(P_1 \leftouterjoin P_2)\mathcal{F}(R) \equiv (P_1 \mathcal{F}(R)) 
\leftouterjoin P_2$.
\end{enumerate}

This rule can be applied as our queries are assumed to be 
\textit{well-designed}. After pushing-in the filters, we run 
\texttt{prune\_triples} on each UNION-free subcomponent of the query same as 
described in Section \ref{sec:union}. We have an option to apply the FILTER 
conditions while loading the BitMat associated with each triple pattern. 
However, we can do so only if the respective FILTER expression satisfies 
following constraints.
\begin{enumerate}
 \item FILTER expression is composed of conjunctive expression 
(only ``$\&\&$'' and no ``$||$''), and
 \item each subexpression in the conjunction 
consists of only one variable (e.g., \textit{FILTER(?a $>$ 60 $\&\&$ ?b $!=$ 
10)}).
\end{enumerate}

Then each subexpression of FILTER can be applied while loading the BitMat 
associated with the triple pattern that contains the respective variable in the 
FILTER subexpression. However if a FILTER expression consists of disjunction 
(``$||$''), it can only be applied after \texttt{multi-way-join}, once each 
result of the query is fully constructed. The decision of applying a filter 
during BitMat loading will also depend on the \textit{selectivity} of the filter 
subexpression, and the available indexes. If these constraints are not 
satisfied, FILTER can be applied only in \texttt{multi-way-join}. For 
simplicity, and the scope of this article, we assume that all the FILTER 
expressions in a BGP-OPT-UNION-FILTER query are applied only in the 
\texttt{multi-way-join} procedure, as a part of the \texttt{nullification} 
process.

Note from Section \ref{sec:union}, that if a query contains UNI-ONed patterns, 
we first prune the triples in the UNION-free subcomponents of the query using 
\texttt{prune\_triples} (Algorithm \ref{alg:prune}), then bring the query in the 
UNF, and then evaluate each subquery independently using 
\texttt{multi-\\way-join}. For a query with UNIONs and FILTER conditions, after 
\texttt{prune\_triples}, we bring the query in UNF by applying rewrite rules 
1--3 described in Section \ref{sec:union}, rule 4 described earlier in this 
section, and additionally applying rule 5 below.
\begin{enumerate}
\setcounter{enumi}{4}
 \item $(P_1 \cup P2) \mathcal{F}(R)$ $= (P_1\mathcal{F}(R)) \cup 
(P_2\mathcal{F}(R))$.
\end{enumerate}

Any FILTER expressions that were not applied during BitMat loading are applied 
as a part of the \texttt{nullifi-\\cation} procedure, when 
\texttt{multi-way-join} produces each result. \texttt{Nullification} not only 
ensures consistent variable bindings (for any reordered inner and left-outer 
joins), but also evaluates the filter conditions. If the filter condition fails 
for the given pattern, we nullify all the variable bindings of that pattern and 
any slaves of the pattern. Recall that here the GoSN of the query comes in handy 
for quickly determining the nullification of any slave patterns. Once the 
results of all the subqueries on UNF are generated, we union them all, and apply 
\texttt{best-match} (\textit{minimum union}) \textit{iff} $(P_1 \leftouterjoin 
(P_2 \cup P_3))$ was rewritten as $(P_1 \leftouterjoin P_2) \cup (P_1 
\leftouterjoin P_3)$ for some subquery, or at least one variable binding was 
nullified during the \texttt{nullification} operation in 
\texttt{multi-way-join}, either as a result of a cyclic subquery or the filter 
condition.

\subsection{DISTINCT} \label{sec:distinct}
A DISTINCT keyword in the SELECT clause eliminates duplicates from the results. 
Consider the following BGP query over an RDFized version of a movie database 
like IMDB, which is asking for all the \textit{distinct} pairs of the actors 
($?a$) and their directors ($?d$).

\vspace{2mm}
\begin{lgrind}
\centering
{\small
\L{\LB{\K{SELECT}_\K{DISTINCT}_?\V{a}_?\V{d}}}
\L{\LB{\K{WHERE}\{}}
\L{\LB{}\Tab{4}{?\V{m}_\V{rdf}:\V{type}_:\V{Movie}_.}}
\L{\LB{}\Tab{4}{?\V{m}_:\V{hasActor}_?\V{a}_.}}
\L{\LB{}\Tab{4}{?\V{m}_:\V{hasDirector}_?\V{d}_.\}}}
}
\end{lgrind}
\vspace{2mm}

\textit{:UmaThurman} has acted in three movies directed by 
\textit{:QuentinTarantino}, they are \textit{:PulpFiction}, 
\textit{:KillBillVol1}, and \textit{:KillBillVol2}. Without the DISTINCT 
clause, this query will generate three copies of (\textit{:UmaThurman}, 
\textit{:QuentinTarantino}) as the variable bindings for \textit{(?a, ?d)} in 
the results. With the DISTINCT clause we will get only one copy.

SPARQL algebra allows an arbitrary number of variables in the DISTINCT clause 
(just like SQL). When there are multiple variables in the DISTINCT clause 
(like in the example above), the distinct values are \textit{composite} of the 
bindings of the respective variables, e.g., (\textit{:UmaThurman}, 
\textit{:QuentinTarantino}) is distinct from \\(\textit{:UmaThurman}, 
\textit{:WoodyAllen}), although they both share \textit{:UmaThurman}. If the 
variables in the DISTINCT clause appear in different triple patterns, like 
in our example, we have to generate intermediate variable bindings of the other 
variables not in the DISTINCT clause, and discard them later. This may create a 
memory overhead. E.g., we first have to generate bindings of all three 
variables \textit{(?m, ?a, ?d)}, project out only bindings of \textit{(?a, ?d)}, 
and then pass these \textit{(?a, ?d)} binding pairs through the DISTINCT filter 
to remove duplicates. Hence for an arbitrary number of variables in the WHERE 
and DISTINCT clauses, evaluation of a query may become memory intensive if the 
DISTINCT clause is composed of many variables that do not appear in the same 
triple pattern.

For a SPARQL query with any intermix of BGP, OPTIONAL, UNION, FILTER, we employ 
different techniques for the evaluation of the DISTINCT clause depending on the 
other clauses in the query. Since BGP-OPT patterns are the basic building 
blocks of a SPARQL query, we begin with how to handle the DISTINCT clause with 
BGP and OPTIONAL patterns so as to avoid using extra memory overhead for the 
removal of non-essential variables, e.g., \textit{?m} in our example above.

\subsubsection{Acyclic BGP} \label{sec:distacybgp}
For a SPARQL query with just a basic graph pattern, i.e., 
inner-joins alone, if it is \textit{acyclic} per the definition of acyclicity 
as presented in Section \ref{sec:acyclic}, we prune the triples using 
\texttt{prune\_triple} (Algorithm \ref{alg:prune}), which leaves 
\textit{minimal} triples in each BitMat associated with the triple 
patterns. \texttt{Mult-way-join} projects out all the variable bindings without 
duplicate removal. If we project out only some variables, we may get duplicates 
(in composite value form), as seen in the example elaborated at the beginning of 
this section. Hence we need a way to remove the bindings of 
\textit{non-essential} variables, i.e., those not appearing in the DISTINCT 
clause, and remove duplicates from the composite bindings of the DISTINCT 
variables. For this discussion we assume the query to not have OPTIONAL, UNION, 
or FILTER patterns. Thus the GoSN of the query is going to have only one 
supernode, $SN_{abs}$, and all the triple patterns are going to be encapsulated 
inside it. Considering the \textit{graph of triple patterns} (GoT) with 
undirected edges between the triple patterns, we identify a \textit{Minimal 
Covering Subgraph} (MCS) such that triple patterns in MCS cover all the 
variables appearing in the DISTINCT clause, and no other strict subgraph of 
this MCS covers DISTINCT variables -- in short we identify the minimum triple 
patterns from GoT that are required to project out the DISTINCT variable 
bindings. \fxnote{How to identify MCS}

\begin{figure}[h]
 \centering
 \includegraphics[scale=0.4]{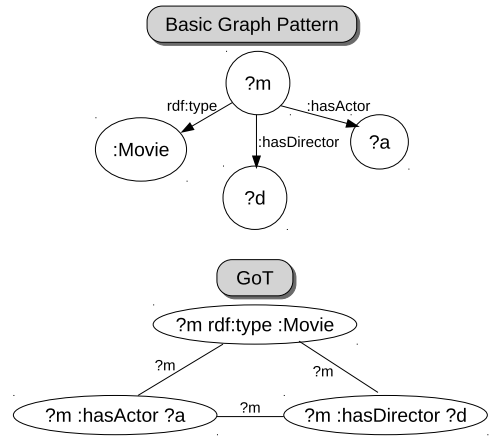}
 \caption{BGP and GoT of the example query}\label{fig:distinctq}
\end{figure}

MCS can be identified methodically as follows. 
First identify all the triple patterns such that it has one or more DISTINCT 
variables in it, and then identify a minimal subgraph that connects all 
these triple patterns. Eliminate a $T_i$ from this subgraph, if it has at least 
one neighbor $T_j$ in this subgraph such that the $dist\_vars(T_i) \subseteq 
dist\_vars(T_j)$, i.e., the DISTINCT variables appearing in $T_i$ also appear 
in $T_j$, and elimination of $T_i$ does not disconnect the rest of the 
subgraph. We continue this process until we do not find any triple pattern to 
eliminate. This makes the \textit{minimal covering subgraph} (MCS).
This MCS is \textit{acyclic}, because the original GoT from which the 
MCS is carved out is acyclic too. Thus conceptually this MCS represents a 
\textit{subquery} of the original BGP query. Now we will operate on this MCS to 
eliminate \textit{non-essential} variables -- those \textit{not} appearing in 
the DISTINCT clause but appearing in the MCS because they connect one or more 
DISTINCT variables. Before that we review some important properties of 
\textit{Boolean matrix multiplication} (BMM) of BitMats. BitMats conceptually 
represent an RDF graph's adjacency matrices.

Figure \ref{fig:distinctq} shows the Basic Graph Pattern (BGP) and the graph of 
triple patterns (GoT) of the query given at the beginning of this section. 
Triple patterns \textit{?m :hasActor ?a} and \textit{?m :hasDirector ?d} are 
connected to each other with 
label \textit{?m} in the GoT. Note that these two triple patterns are part of 
the MCS due to \textit{?a} and \textit{?d} variables, but the triple pattern 
\textit{?m rdf:type :Movie} is not. The BitMats associated with these triple 
patterns contain all the triples which have predicates (edge-labels) 
\textit{:hasActor} and \textit{:hasDirector} respectively, and they in turn 
represent the adjacency matrices of two subgraphs of the original RDF graph,  
with only \textit{:hasActor} and \textit{:hasDirector} edge labels respectively. 
The transpose of the BitMat of \textit{:hasDirector} conceptually reverses 
the direction of edges between the respective RDF nodes and represents a 
triple pattern \textit{?d :hasDirector ?m}. Thus if we do a BMM of the 
\textit{transpose} of BitMat of \textit{:hasDirector} with the BitMat of 
\textit{:hasActor}, the resultant matrix represents all the \textit{distinct} 
pairs of nodes that have \textit{at least one undirected path} with edge labels 
\textit{:hasDirector--:hasActor} between them, and eliminates the bindings of 
\textit{?m}. It, in turn, eliminates the RDF nodes representing 
\textit{?m}, which is a common variable between the two triple patterns. 
Figure \ref{fig:distinctq2} pictorially represents this concept of Boolean 
Matrix Multiplication. Note that since this query is \textit{acyclic}, and we 
have already done \texttt{prune\_triple}, $BitMat_1$ and $BitMat_2$ have 
\textit{minimal} triples left in it, thus their BMM gives us the exact distinct 
pairs of $(?a, ?b)$ in $BitMat_{12}$.

\begin{figure}[h]
 \centering
 \includegraphics[scale=0.38]{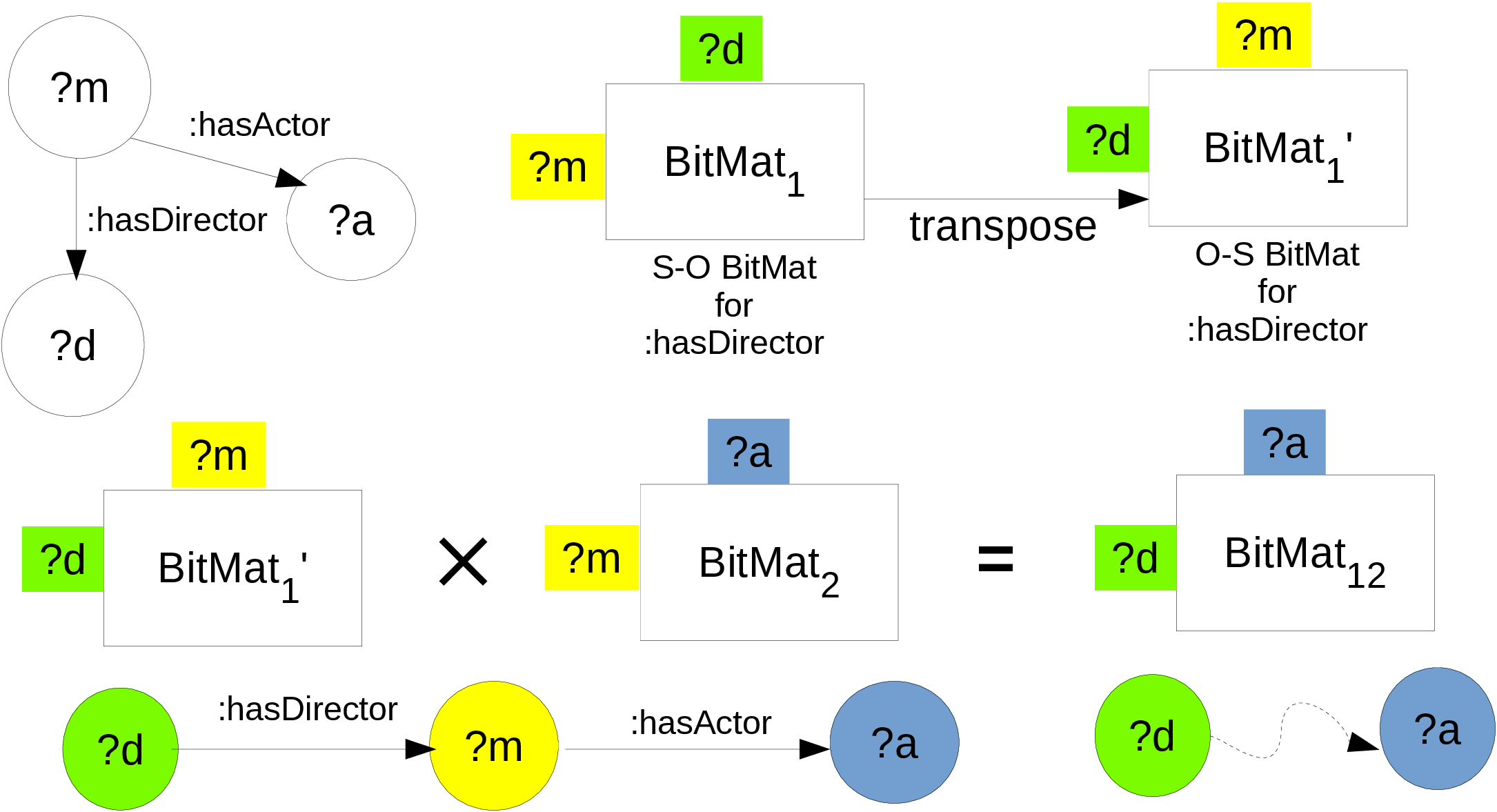}
 \caption{Boolean Matrix Multiplication}\label{fig:distinctq2}
\end{figure}

\begin{figure*}[t]
 \centering
     \includegraphics[scale=0.55]{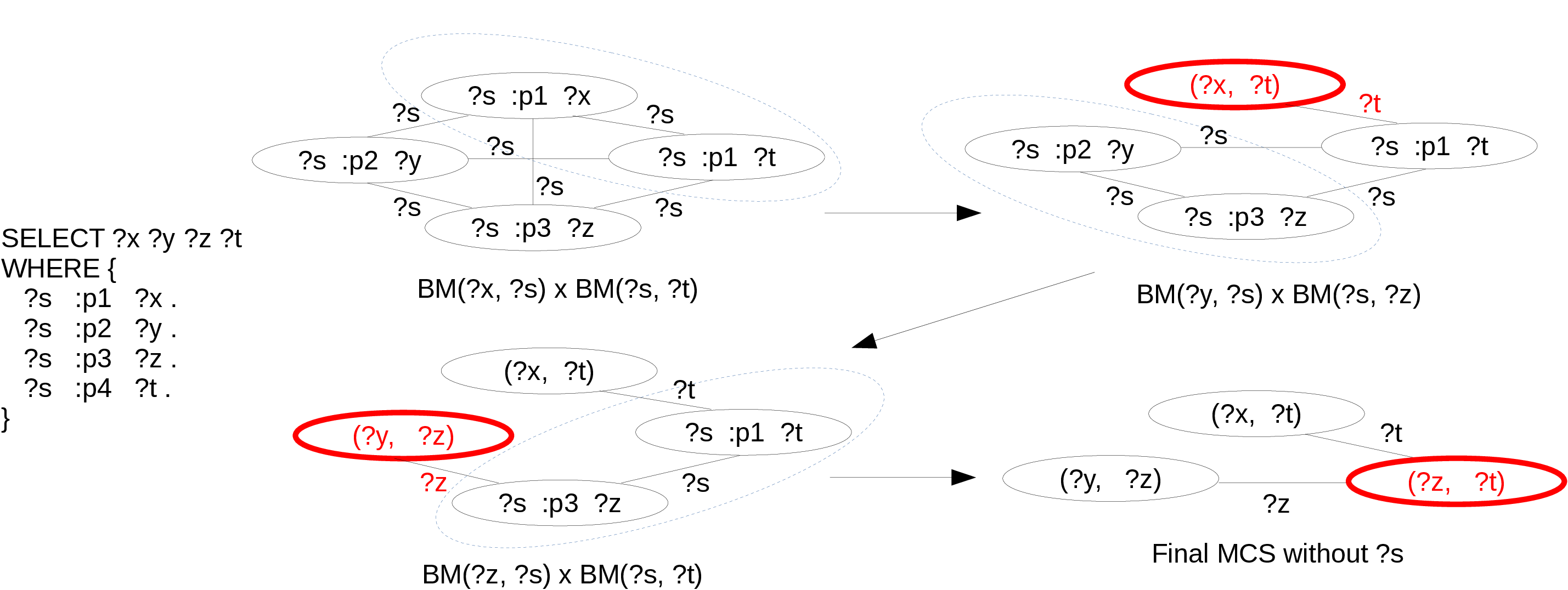}
     \caption{Remove ?s from MCS using the algorithm}\label{fig:elimy}
 \end{figure*}

\begin{prop}
 An edge between two triple patterns in a minimal covering subgraph signifies a 
potential Boolean matrix multiplication between them to eliminate the variables 
common between them, which appear as the edge label.
\end{prop}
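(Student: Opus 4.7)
The plan is to make explicit the correspondence between an MCS edge labeled by a shared variable $?v$ connecting triple patterns $T_i$ and $T_j$, and a single Boolean matrix multiplication of the two BitMats $BM_{T_i}$ and $BM_{T_j}$ along the $?v$ dimension. First I would fix notation for the BitMats: each $BM_{T_k}$ is a 2D matrix whose two axes correspond to the two non-fixed positions of $T_k$ (after loading with appropriate fixed positions pinned by constants). Since $?v$ appears in both $T_i$ and $T_j$ by the definition of GoT edges (Section \ref{sec:transform}), there is a distinguished axis in each BitMat that represents $?v$.

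Next I would invoke the definition of Boolean matrix multiplication: for Boolean matrices $A$ and $B$, $(A \cdot B)_{xy} = \bigvee_{v} A_{xv} \wedge B_{vy}$. Interpreting entries of $BM_{T_i}$ and $BM_{T_j}$ as triples, this says that the $(x,y)$ entry of the product is set iff there exists at least one value of $?v$ such that $T_i$ has a triple with bindings $(x,v)$ and $T_j$ has a matching triple with bindings $(v,y)$. This is precisely the projection $\pi_{vars(T_i,T_j)\setminus\{?v\}}(T_i \Join_{?v} T_j)$, i.e., the distinct non-$?v$ bindings produced by joining the two triple patterns along the shared variable. Thus the edge in MCS labeled $?v$ operationally corresponds to this specific BMM, provided the $?v$ axes of the two BitMats are aligned; whenever they are not, a transpose of one of the two BitMats is applied first (exactly as illustrated for \texttt{:hasActor} and \texttt{:hasDirector} in Figure \ref{fig:distinctq2}).

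For the correctness of ``eliminating'' $?v$ via this BMM (as opposed to merely computing it), I would appeal to the acyclicity of GoT and the properties of \texttt{prune\_triples} (Algorithm \ref{alg:prune}) together with Lemma \ref{lemma:acyclic}. Because the MCS inherits acyclicity from the original GoT, after pruning each $BM_{T_k}$ contains only minimal triples with respect to the subquery induced by the MCS. Consequently every $?v$-binding present on the $?v$-axis of $BM_{T_i}$ has a matching $?v$-binding in $BM_{T_j}$, so the BMM does not spuriously drop any distinct non-$?v$ pair, nor does it introduce any binding that is not in the true join result. This is where the earlier minimality machinery does the real work; without it, a pure BMM could not be claimed to eliminate $?v$ safely.

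The hard part, I expect, will be the bookkeeping around axis alignment and multi-edge cases rather than the algebraic core. Specifically, I would need to handle (i) triple patterns where $?v$ sits in the subject, object, or predicate position, since this dictates whether a transpose or an alternate index (S-O, O-S, P-S, P-O) is used before the BMM; and (ii) edges in MCS whose label is a \emph{set} of shared variables rather than a single variable, which forces the BMM to be performed on a composite axis (or replaced by a sequence of semi-joins followed by a BMM on the residual axis). Both issues are procedural and do not affect the underlying correspondence; I would dispatch them by a short case analysis on the position of $?v$ in each triple pattern and by appealing to the equivalence-class machinery of Definition \ref{def:eqvclass} for multi-variable edges.
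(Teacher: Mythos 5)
Your proposal is correct and follows essentially the same route as the paper, which states this property without a separate formal proof and justifies it by the preceding discussion: the BitMats are adjacency matrices, a (possibly transposed) Boolean matrix product along the shared-variable axis is exactly the join of the two triple patterns with that variable projected out, and acyclicity plus \texttt{prune\_triples} (minimality of the remaining triples) is what makes the elimination exact. Your added bookkeeping on axis orientation and multi-variable edge labels is a faithful elaboration of the same argument rather than a different approach.
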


We make use of this property to methodically eliminate non-essential variables 
in the MCS to shrink it further, and get a set of BitMats absolutely required 
to project the DISTINCT variable bindings. The MCS shrinking process is carried 
out as follows.\fxnote{Draw alt fig with GoT.}

\begin{enumerate}
 \item For every pair of triple patterns $T_i$ and $T_j$ that have an 
undirected edge between them with label, say \textit{?m}, such that $?m$ is a 
non-essential variable, do a BMM of $T_i$ and $T_j$ such that bindings for ?m 
get eliminated. Whether we need to take a transpose of the BitMat or not 
depends on whether we have loaded S-O or O-S BitMat for the respective triple 
pattern (ref Section \ref{sec:index}). E.g., in our example if we have loaded 
O-S BitMat \textit{(?m :hasDirector ?d)} and S-O BitMat of \textit{(?m 
:hasActor ?a)}, then we do a BMM of $BM_{:hasDirector} \times 
BM_{:hasActor}$ with no change in the respective BitMats. However if we have 
loaded S-O BitMat of \textit{(?m :hasDirector ?d)} or O-S BitMat of \textit{(?m 
:hasActor ?a)}, we take the transpose of them before doing the BMM. Let us 
denote the resultant BitMat as $T_{ij}$.

\item $T_{ij}$ can connect to any neighbor of $T_i$ or $T_j$ (excluding $T_i$ 
and $T_j$), if it shares the exact same variables with the respective neighbor 
as done by $T_i$ or $T_j$. If $T_{ij}$ can thus connect to all of $T_i$ and 
$T_j$'s neighbors, we remove both $T_i$ and $T_j$ by connecting $T_{ij}$ to 
their neighbors.

\item $T_{ij}$ may not be able to connect to all of the $T_i$ and $T_j$'s 
neighbors, if $T_i$ or $T_j$ connects with other triple patterns over the 
same edge label between $T_i$ and $T_j$, e.g., $?m$. In that case we keep 
either of $T_i$ or $T_j$, eliminate the other. We connect $T_{ij}$ to the 
preserved triple pattern and eliminated triple pattern's neighbors (whichever 
it can connect to). This procedure of preserving either of $T_i$ or $T_j$ is 
explained further. If $T_i$ is connected to $T_j$ over, say 
$?s$, $T_{ij}$ cannot connect to any neighbors of $T_i$ and $T_j$ that are 
connected over $?s$. By preserving either of $T_i$ or $T_j$, we ensure that 
the remaining MCS remains connected.

\item We continue this process of eliminating triple patterns and their 
respective BitMats, until we have an MCS where each edge label contains a 
DISTINCT variable, or it has only one BitMat.
\end{enumerate}

In Figure \ref{fig:elimy}, we have shown a sample query where we need to 
eliminate the bindings of $?s$, while preserving correlations between the 
bindings of $?x$, $?y$, $?z$, $?t$. The figure also shows an evolution of this 
MCS to eliminate $?s$ using the above algorithm. Intuitively, we eliminate all 
the intermediate \textit{non-required} variables, by establishing a direct 
correlations between the bindings of the required variables. E.g., when $?x$ and 
$?t$ are part of two different triple patterns connected over $?s$, bindings of 
$?x$ and $?t$ are correlated through $?s$. When we do a BMM, $BM(?x, ?s) \times 
BM(?s, ?t)$ = $BM(?x, ?t)$, we establish a direct correlation between the 
bindings of the $(?x, ?t)$ pair, and eliminate the need of having $?s$ as an 
intermediary.

This algorithm is \textit{monotonic} -- at the end of one iteration, the edges, 
nodes, and BitMats in an MCS remain the same or become fewer than before. It 
gradually eliminates the edges with join variables that do not appear in the 
DISTINCT clause. At the end of the procedure, we are left with an MCS with 
BitMats -- either original or new ones created in the process of BMM -- and 
edges with only DISTINCT variables. This algorithm always converges when 
all the non-required variables are eliminated from the MCS. Also note that the 
total BitMats at the end of the algorithm are always \textit{fewer} than the 
original BitMats in the query -- note that in step 2, we \textit{remove two} 
BitMats while creating a \textit{new one}, and in step 3 we remove one BitMat 
and create one. Hence eventually we are left with \textit{fewer} BitMats -- thus 
reducing the memory requirements.

We join these BitMats with each other using the same \texttt{multi-way-join}
procedure (Algorithm \ref{alg:multiway}). Note that we can \textit{carve} out 
an MCS from the original GoT, because the query is \textit{acyclic}, and each 
triple pattern in the query has \textit{minimal} triples after 
\texttt{prune\_triples} (Algorithm-\ref{alg:prune}).

\subsubsection{Acyclic BGP-OPT queries} \label{sec:distbgpopt}
For the queries with an intermix of BGP and OPTIONAL patterns, we consider 
queries whose (a) GoT is completely acyclic, and (b) queries with cyclic GoT.

Like an acyclic BGP query, for an acyclic BGP-OPT query, 
\texttt{prune\_triples} ensures minimal triples associated with each triple 
pattern in the query. We eliminate non-essential variables before doing 
\texttt{multi-way-join} as follows. Starting with $SN_{abs}$, we go over all 
the supernodes in the master-slave hierarchy (masters before their respective 
slaves). We mark a supernode if it contains one or more variables appearing in 
the DISTINCT clause that do \textit{not} appear in any of its masters. This is 
because a variable's binding from the master triple pattern always dominates 
the binding from a slave triple pattern. So there is no need to consider a 
slave supernode if all of its DISTINCT variables are covered by one or more of 
its master. After this, we carve out a \textit{Minimal Covering GoSN} (MCGoSN) 
such that all the marked nodes are in a \textit{minimal connected subgraph} of 
GoSN.

An MCGoSN is connected if there there is an \textit{undirected path} from one 
supernode to another disregarding the directionality of the edges connecting 
the supernodes (recall our GoSN construction from Section \ref{sec:gosn}). 
Since we have a unique $SN_{abs}$ in our GoSN, any connected MCGoSN always 
contains $SN_{abs}$. Now there are two cases -- (1) $SN_{abs}$ and every 
supernode in MCGoSN contains at least one DISTINCT variable, (2) all the 
DISTINCT variables are contained in one or more slaves. For the second case, 
we cannot use the Boolean Matrix Multiplication technique to eliminate 
non-essential variables, and thus for this type of query we have to resort to 
standard way of listing all the bindings of DISTINCT variables, and then removing 
duplicates.

For the first case however, we can use the BMM technique as follows. We order 
all the supernodes in MCGoSN per master-slave hierarchy. We start with 
$SN_{abs}$ and the triple patterns in it, and carve out a \textit{Minimal 
Covering Subgraph} (MCS) from the GoT of only these triple patterns that cover 
the DISTINCT variables that appear in $SN_{abs}$. Next we iteratively go over 
the next supernode in the master-slave order, say $SN_i$. We extend the 
previously carved MCS to \textit{minimally} cover the triple patterns in $SN_i$ 
for the DISTINCT variables that appear \textit{only} in $SN_i$, but not in any 
of its masters. At the end of this exercise we get an MCS that contains triple 
patterns from different supernodes.

Next, we need to eliminate the non-essential join variables and triple patterns 
from this MCS in the same way we did for pure BGP queries. However, in this 
MCS, we may have a master-slave hierarchy between the two neighboring 
triple patterns. We take care of this hierarchy as follows. Pairs of 
connected triple patterns, $(T_i, T_j)$ can be categorized as -- (a) both 
$T_i$ and $T_j$ $\in$ $SN_{abs}$, (b) $T_i$ is a master of $T_j$ or vice versa, 
(c) $T_i$ and $T_j$ are slave peers contained in the same slave supernode, (d) 
$T_i$ and $T_j$ are contained in different slave supernodes.

Starting with type (a) pairs first, we completely reduce the part of MCS 
contained in $SN_{abs}$. This reduction will cause some changes in the MCS 
nodes and connections. Next we consider all (b) type pair of nodes in MCS. Let 
us assume $T_i$ is the master of $T_j$. If the edge-label between $(T_i, T_j)$ 
is a non-essential variable, we do a BMM, remove $T_j$, and reconnect $T_i$ to 
the newly created BitMat. Once we are done with all type (a) and (b) pairs, we 
shrink (c) type pairs same as the acyclic BGP technique. We ignore the (d) type 
edges. That is because recall that we assume OPTIONAL pattern queries to be 
\textit{well-designed}, so even if there are (d) type edges in the MCS, $T_i$ 
and $T_j$ always have an indirect path connecting them through their masters. 
Thus we gradually shrink this MCS to leave only the BitMats containing 
the bindings of the DISTINCT variables. Then we run \texttt{multi-way-join} on 
these BitMats in the standard way in the master-slave hierarchical order.

\subsubsection{Cyclic BGP-OPT queries}
For the cyclic BGP-OPT queries, the minimality of triples after 
\texttt{prune\_triples} cannot be guaranteed, so we cannot use this technique.  
For such queries, we have to enumerate the results with duplicates, and then 
remove of them using a na\"{\i}ve method.

\subsubsection{UNION, FILTER}
For queries with UNION or FILTER clauses along with BGP and OPT patterns, we 
cannot use the optimization technique of carving out an 
MCS from the entire query. The reason is -- we can carve out an MCS for 
DISTINCT variable binding projections only if each triple pattern in the query 
has \textit{minimal} triples associated with it. In a BGP-OPT query with UNION 
pattern, we do the pruning of UNION-free BGP-OPT sub-patterns in the given query 
(as described in Section \ref{sec:union}). Then get the query in the UNF $P_1 
\cup P_2 \cup ... \cup P_k$, and perform \texttt{multi-way-joins} on each $P_i$ 
in the UNF. However, since we did the pruning only on the UNION-free BGP-OPT 
subparts of the original query, the triples associated with the each triple 
pattern in $P_i$ may not be minimal.

The method presented in Section \ref{sec:filter} for handling FILTERs in 
DISTINCT-free queries shows that an arbitrary FILTER condition can cause 
\textit{nullification} while doing \texttt{multi-way-joins}, thereby altering 
the \textit{minimality} of triples associated with the triple patterns 
on-the-fly. Hence for the SPARQL queries asking for DISTINCT projection of some 
variables, where the query body has UNIONs or FILTER conditions, we use the 
na\"{\i}ve way of projecting out all the variable bindings of the DISTINCT 
variables, and then sorting to remove any duplicates.

\section{Related Work} \label{sec:relwork}
SPARQL BGP queries are similar to SQL inner-joins, and thus naturally a lot of 
SQL inner-join optimization techniques have been applied to SPARQL BGP query 
optimization. While the BGP query optimization has got a lot of attention, 
the discussion about optimization of other SPARQL components such as OPTIONAL 
patterns, UNIONs, FILTERs, DISTINCT clauses is quite sparse. This is despite 
the fact that in the context of SPARQL queries, these other components do make 
as large as 94\% of the queries \cite{usewod11,swim,manvmachine,practsparql}.
Previous work \cite{arenas,iswc14,perez2,schmidt}, has extensively 
analyzed the semantics of \textit{well-designed} OPTIONAL patterns from the 
perspective of \textit{tractability} properties. However, these texts have not 
focused on the discussion of other SPARQL components that are covered in this 
article. The idea of query graph of supernodes (GoSN) presented in this paper  
is reminiscent of \textit{well-designed pattern trees} (WDPT) 
\cite{perez,letelier2}, but WDPTs are undirected and unordered, whereas GoSN is 
directed, and establishes an order among the patterns (\textit{master-slave}, 
\textit{peers}), which is an integral part of our optimization techniques.
Also while previous discussion has focused on WDPTs and tractability results, 
they have not taken into consideration the aspects of optimization techniques 
from the point of view of \textit{minimality} of triples, and the \textit{order} 
of processing \textit{semi-joins} and \textit{multi-way-joins}, which make the 
performance intensive components of query evaluation techniques.

Galindo-Legaria, Rosenthal 
\cite{galindosigmod,galindo-legaria1,galindo-legaria2} and Rao et al 
\cite{rao2,rao1} have proposed ways of achieving SQL left-outer-join 
optimization through reordering inner and left-outer joins. Their work is 
closest to the work in this article. Rao et al have proposed 
\textit{nullification} and 
\textit{best-match} operators to handle inconsistent variable bindings and 
subsumed results respectively (see Section \ref{sec:nullbm}). In their 
technique, nullification and best-match are required for \textit{each} reordered 
query, as the minimality of tuples is not guaranteed. They do not use methods 
like \texttt{prune\_triples} to eliminate unwanted tuples before joins. 
Bernstein et al and Ullman \cite{semij2,semij1,ullman} have proved the 
properties of \textit{minimality} for \textit{acyclic inner-joins} only. 
Through our work, we have taken a major step forward by extending these 
properties in the context of SPARQL OPTIONAL patterns (SQL left-outer-joins), 
and finding ways to avoid overheads like \textit{nullification} and 
\textit{best-match} operations. We have also extended our previous results 
presented in \cite{atresigmod15} about the class of queries that can avoid 
nullification and best-match despite reordering of inner and left-outer joins.

Additionally, in this article we have extensively analyzed the UNION, FILTER, 
and DISTINCT clauses of SPARQL from the point of view of \textit{minimality} of 
triples, ways of unioning the results (\textit{minimum-union} versus standard 
union-all), and structural aspects of queries (cyclicity). With this analysis we 
have shown that a large number of  -- acyclic as well as some 
(good) cyclic -- SPARQL queries can be optimized by using our techniques of 
BGP-OPT query processing as \textit{building blocks}, and can avoid the 
additional overheads of processing and indexing for the correctness of the 
results. Our article also throws a new light on the treatment of \textit{NULL} 
values in the RDF and SPARQL context, and the implications of various SPARQL 
components -- other than Basic Graph Patterns (inner-joins) -- in the presence 
of NULL values in the query results. To the best of our knowledge, this topic, 
which directly affects implementation and optimization methods, has not been 
handled before.

For inner-join optimization, RDF engines like TriAD \cite{triad}, RDF-3X 
\cite{rdf3x}, gStore \cite{gstore} take the approaches like graph summarization, 
sideways-information-passing etc for an early pruning of triples. Systems like 
TripleBit \cite{triplebit} use a variable length bitwise encoding of RDF 
triples, and a query plan generation that favors queries with ``star'' joins, 
i.e., many triple patterns joining over a single variable. RDF engines built on 
top of commercial databases such as DB2RDF \cite{ibmsigmod13} propose creation 
of \textit{entity-oriented} flexible schemas and better data-flow techniques 
through the query plan to improve the performance of ``star'' join queries. 
Along with this, there are distributed RDF processing engines such as H-RDF-3X 
\cite{hrdf3x} and SHARD \cite{shard}.

While many of these engines mainly focus on efficient indexing of RDF graphs, 
BGP queries (inner-joins), and exploiting ``star'' patterns in the queries, we 
have focused on the broader components of SPARQL patterns such as OPTIONALs, 
UNIONs, FILTERs, and DISTINCT, which cannot always exploit the benefits of 
inner-join focused query optimizers.

\section{Conclusion}
In this article we have done an extensive analysis of a wide range of SPARQL 
components such as OPTIONAL patterns, UNIONs, FILTERs, DISTINCTs, and proposed 
that they can be evaluated by simply using our BGP-OPT pattern's optimization 
techniques as building blocks. We have extended the previously proposed 
concepts of \textit{minimality} of triples (tuples), \textit{cyclicity} of 
queries, and \textit{nullification}, \textit{best-match} operations. With this 
\textit{first of a kind} analysis of a wide range of SPARQL components, we hope 
to create novel optimization techniques for performance intensive SPARQL 
components. Our analysis shows that this is be possible by simple semantic 
manipulation of various intermixed query components. Our article also 
elaborately discusses the treatment of NULL values in the RDF and SPARQL 
context.

Since there is a very close resemblance between SPA-RQL components and SQL 
queries, our proposed techniques, as well as observations about query's 
structural properties and optimization opportunities, can be directly applicable 
to the respective SQL components too.

\end{document}